\documentclass[useAMS,usenatbib,12pt]{article}
\usepackage{natbib} 
\usepackage{fancyhdr}
\usepackage[figuresright]{rotating} 
\usepackage{adjustbox} 
\usepackage{amsmath,amssymb,amsthm,color,mathrsfs}
\usepackage{multirow} 
\usepackage{booktabs}
\usepackage{hyperref}
\usepackage{authblk} 
\hypersetup{
    colorlinks=true,
    linkcolor=blue,
    filecolor=blue,      
    urlcolor=blue,
    citecolor=blue,
}
\urlstyle{same}

\usepackage{setspace}
\doublespacing
\usepackage{geometry}
\geometry{ margin=0.5in, lines=25}
\usepackage{ulem,enumitem,anysize}


\newcommand{\rom}[1]{\uppercase\expandafter{\romannumeral #1\relax}}
\newcommand{\indep}{\rotatebox[origin=c]{90}{$\models$}}

\newtheorem{lemma}{Lemma}

\newtheorem{proposition}{Proposition}
\newtheorem{assumption}{Assumption}

\def\bSig\mathbf{\Sigma}
\setlength{\bibsep}{0pt plus 0.25ex}

\usepackage{sectsty}
\sectionfont{\large}
\subsectionfont{\normalsize}
\subsubsectionfont{\normalsize}

\title{\large\vspace{-2.0cm}\textbf{ Matched Design for Marginal Causal Effect on Restricted Mean Survival Time in Observational Studies}}
\author[1]{Zihan Lin}
\author[1]{Ai Ni}
\author[1]{Bo Lu\thanks{Corresponding author: Bo Lu (lu.232@osu.edu)}}
\date{}
\affil[1]{Division of Biostatistics, College of Public Health,
The Ohio State University, Columbus, OH, USA}

\pagestyle{fancy}
\rhead{}
\chead{Matched Design for Marginal Causal Effect on RMST in Observational Studies}
\lhead{}

\begin{document}
\maketitle
\vspace{-1.0cm}
\begin{abstract}

\setstretch{1.5}
Investigating the causal relationship between exposure and the time-to-event outcome is an important topic in biomedical research. Previous literature has discussed the potential issues of using the hazard ratio as a marginal causal effect measure due to its noncollapsibility property. In this paper, we advocate using the restricted mean survival time (RMST) difference as the marginal causal effect measure, which is collapsible and has a simple interpretation as the difference of area under survival curves over a certain time horizon. To address both measured and unmeasured confounding, a matched design with sensitivity analysis is proposed. Matching is used to pair similar treated and untreated subjects together, which is more robust to outcome model misspecification. Our propensity score matched RMST difference estimator is shown to be asymptotically unbiased and the corresponding variance estimator is calculated by accounting for the correlation due to matching. The simulation study also demonstrates that our method has adequate empirical performance and outperforms many competing methods used in practice. To assess the impact of unmeasured confounding, we develop a sensitivity analysis strategy by adapting the E-value approach to matched data. We apply the proposed method to the Atherosclerosis Risk in Communities Study (ARIC) to examine the causal effect of smoking on stroke-free survival.

\textbf{Keywords:} Confounding Bias; Marginal Effect; Noncollapsibility;  Propensity Score Matching; Restricted Mean Survival Time; Sensitivity Analysis.
\end{abstract}


\section{Introduction}
\label{s:intro}

\subsection{Causal Inference for Observational Survival Data}
In biomedical studies, time-to-event is a commonly used outcome measure and the statistical analyses for such data are usually referred to as survival analysis. Investigating the causal relationship between exposure and the time-to-event outcome is an important topic, with either randomized trials or observational studies. Causal inference for observational survival data has several challenges. First, since not all subjects can be observed for the full duration of time to event, the survival data suffer from censoring, which is a type of missing data problem. Therefore, standard statistical methods are usually not sufficient to handle both censoring and the missingness of potential outcomes. Second, the hazard ratio is a popular choice for measuring the association of survival outcomes between two groups, for convenience and easy interpretation. However, the hazard ratio is generally not an appropriate marginal causal effect measure due to its noncollapsibility property \citep{greenland1999confounding,tmcop,sjolander2016note}. Other effect measures need to be considered to warrant valid marginal causal interpretation for survival data. Third, confounding is a major challenge in observational studies, which includes both measured and unmeasured confounders. 
Propensity score adjustments are popular tools for controlling the observed confounding \citep{cps}. But even with successful adjustment of observed confounding, observational data are still vulnerable to unmeasured confounding. Thus, appropriate sensitivity analysis needs to be developed to assess the impact of hidden bias \citep{rosenbaum2020design}. 


The issues of using the hazard ratio as a marginal causal effect measure have been discussed extensively in the literature. \citet{greenland1999confounding} pointed out that the hazard ratio has the noncollapsibility property when the treatment effect is nonzero.  \citet{hoh} argued that using the hazard ratio as a treatment effect measure may not have valid causal interpretation even in randomized studies, since the hazard ratio has a built-in selection bias and may change over time. \citet{tmcop} studied the estimation of treatment effect in the presence of confounders and found the amount of confounding due to noncollapsibility in the Cox proportional hazards (PH) model would be very difficult to quantify. \citet{doescox} offered a more theoretical perspective on the conditions under which the hazard has a valid causal interpretation.  They suggested that the hazard function $h(t,x,z)$ must satisfy an additive assumption $h(t,x,z)=a(t,z)+b(t,x)$ to yield a causal interpretation, where $a(t,z)$ is a function of survival time $t$ and treatment assignment $z$ and $b(t,x)$ is a function of survival time $t$ and covariates $x$. \citet{ni2021stratified} further illustrated that even under a PH model, the marginal hazard ratio is not a constant, after integrating out covariates. Thus, a valid and simple-to-use causal effect measure for survival outcome is highly desirable. 


\subsection{RMST Difference as A Marginal Causal Effect Measure}

The restricted mean survival time (RMST) has been used in randomized clinical studies to evaluate treatment effects \citep{royston2013restricted,trinquart2016comparison}. The RMST difference is more advantageous than the hazard ratio as a marginal effect measure. First, the RMST has an intuitive interpretation as the area under the survival curve over a certain time horizon. Second, the RMST difference is the difference of truncated mean survival time between two groups, which is essentially a mean difference. So it is collapsible, meaning that the marginal and conditional effects are compatible. Third, the treatment effect measured by the RMST difference can be asymptotically unbiasedly estimated without PH assumption, while the conventional Cox model heavily relies on such assumption.

To take advantage of the collapsibility of RMST difference, we can construct RMST regression by including covariates to better control for confounding or increase estimation efficiency. Several methods of regressing RMST on multiple covariates have been developed. \citet{karrison1987restricted} examined the RMST as an index for comparing survival in two groups and proposed to model the hazard with piece-wise exponential models assuming covariates have a multiplicative effect on the hazard. \citet{zucker1998restricted} further simplified the implementation procedure for Karrison's method and provided an extended version to achieve robustness against model misspecification. \citet{andersen2004regression} compared several regression analysis methods of mean survival time and RMST, and they proposed a regression method based on pseudo-observations. \citet{tian2014predicting} developed an RMST regression model with adjustment for baseline covariates. They constructed an estimating equation with the inverse probability of censoring weighting (IPCW) to obtain consistent estimates.
\citet{wang2018modeling} models the RMST using generalized estimating equation methods, which allows censoring to depend on both baseline covariates and time-dependent factors. 

Though RMST differences have been reported in many randomized clinical studies, there is only limited discussion of using RMST in observational studies, probably due to the challenge of confounding adjustment. Propensity score weighting and stratification methods have been used in the literature, but not propensity score matching. \citet{zhang2012double} derived a double-robust estimator for RMST difference based on the inverse probability of treatment weighted (IPTW) estimating equation with augmentation term. To adjust for confounding factors, they built three working models for survival time, treatment assignment, and censoring, then incorporated them into the augmentation term. They assumed the PH assumption in outcome modeling, which might be violated in practice. \citet{conner2019adjusted} proposed a weighted method to compare the adjusted RMST difference directly. Unlike Zhang and Schaubel's work, Conner et al. estimated the RMST based on the Kaplan-Meier(KM) estimator rather than the Nelson-Aalen estimator. They adjusted the KM estimator with IPTW and derived the adjusted RMST by integrating the IPTW-adjusted KM estimator. \citet{ni2021stratified} proposed a propensity score stratified RMST difference estimation strategy to examine the marginal causal effect with observational survival data, which can combine stratification with further regression adjustment.

\subsection{A Motivating Example: Atherosclerosis Risk in Communities (ARIC)}

In the United States, stroke is a severe disease that causes serious disability for adults and is a leading cause of death \citep{kochanek2014mortality,members2016heart}. Several previous studies have shown that smoking is an important risk factor for stroke \citep{wolf1988cigarette,shinton1989meta}, and even passive smoking could increase the risk of stroke \citep{bonita1999passive}.
Although the causal pathway between smoking and stroke is unclear, \citet{shah2010smoking} found that the more people smoke the more likely they were to have a stroke, and people who quit smoking showed a significantly lower risk of stroke, which provides some evidence for the causal relationship between smoking and stroke.  

The Atherosclerosis Risk in Communities (ARIC) Study \citep{aric1989atherosclerosis} is a prospective cohort study conducted in four U.S. communities. Four thousand adults aged 45–64 years old were randomly sampled from each of four U.S. communities, and the final dataset contains information of 15,792 individuals. After a baseline examination during 1987 to 1989, subjects were followed up for the development of incident ischemic stroke, and first definite or probable hospitalized stroke. Due to the length of follow-up, not all event times were observed, so the data were subjected to censoring. One primary outcome is the time to first stroke or death (whichever comes first), and a subject is censored if the incidence of stroke or death is not observed by the end of the study. We try to answer a causal question, using this ARIC dataset: how smokers' stroke-free survival would change had they not smoked at baseline. Matched design is a natural choice to address this as it is about the causal effect of those being exposed to smoking, rather than for the entire population. 

Existing literature mostly used the Cox PH model to analyze the ARIC data. \citet{kwon2016association} and \citet{ding2019cigarette} studied the association between smoking status and risk of stroke, using the Cox PH model to estimate the HR of smoking status on the risk of stroke. Thus, the estimated effects were interpreted as conditional rather than marginal. Moreover, it is possible that some prognostic factors were not included in the confounder adjusted regression model, which would lead to biased estimates of conditional effects. An analysis with RMST as the effect measure may provide new insight into this research.

In this paper, we propose a propensity score matching based RMST difference estimator and develop a corresponding sensitivity analysis strategy for assessing the impact due to unmeasured confounding. We apply this method to the ARIC study to examine the causal effect of smoking on stroke-free survival. The rest of the paper is organized as follows: In section \ref{s:method}, we set up the notation and assumptions, and describe the proposed RMST estimator with its theoretical properties. In section \ref{s:simu}, we conduct a simulation study to examine the empirical performance of our proposed method under different scenarios and also compare it with several commonly used methods in practice. In section \ref{s:sensit}, we develop a sensitivity analysis strategy by adapting the E-value approach to matched data. In section \ref{s:real}, we present the analysis results of the ARIC data. Section \ref{s:discuss} concludes the paper with some discussions.

\section{Method: Matched RMST Difference Estimation}
\label{s:method}
\subsection{Notation and Assumptions}
We follow the potential outcomes framework proposed by \citet{rubin1974estimating} to define the causal effects. In a two-arm survival analysis study, let $A$ be the treatment assignment indicator (or more generally, the exposure status), such that $A=1$ indicates being exposed to the treatment and $A=0$ indicates being exposed to the control. Let $T^A$ denote the potential event time and $S^A(t)$ denote the corresponding survival function for a subject under treatment $A$. 
The following two assumptions are extensions of commonly used assumptions for causal inference in observational studies \citep{imbens2015causal}. 

\begin{assumption}
\label{sutva}
Stable Unit Treatment Value Assumption (SUTVA). The potential survival times for one individual in the population do not vary with the treatment assigned to others. And there are no different versions of the specified treatment level. 
\end{assumption}
\begin{assumption}
\label{strong_ignore}
Treatment assignment is strongly ignorable given covariates $X$, that is $(T^0,T^1)\indep A|X$ and $0<pr(A=1|X)<1$.
\end{assumption}

The potential restricted event time is defined as $Z^A=\min (T^A,\tau)$, where $\tau$ is the truncation time point, which is usually pre-specified at the design stage based on clinical relevance and study feasibility.  Both $T^A$ and $Z^A$ are subject to censoring by a random variable $C$. We introduce two additional assumptions for survival data.


\begin{assumption}
\label{indep_censor}
The censoring random variable $C$ is independent of $T^A$ and $Z^A$ given treatment indicator $A$ and covariates $X$, that is $C\indep T^A |(A,X)$ and $C\indep Z^A |(A,X)$.
\end{assumption}

\begin{assumption}
\label{tau_assump}
The truncation time point is smaller than the largest observed survival time, $\tau<t_{\max}$, where $t_{\max}$ is the largest follow up time (event or censored). 
\end{assumption}

Assumption \ref{tau_assump} is a technical one to ensure that the pre-specified $\tau$ is clinically meaningful and RMST can be asymptotically unbiasedly estimated. 

Let $\delta^A=I(Z^A<C)$ denote the censoring indicator, then the observed restricted time is defined as $Y^A=\min(Z^A,C)=(Z^A)^{\delta^A}C^{(1-\delta^A)}$. For a subject under treatment $A$, the potential outcome of restricted mean survival time is defined as $\mu^A(\tau)=E(Z^A)=\int^\tau_0S^A(t)dt$, then the average treatment effect (ATE) on RMST, denoted by $\Delta_{ATE}$, can be defined as
$$\Delta_{ATE}=\mu^1(\tau)-\mu^0(\tau)=E(Z^1)-E(Z^0)=\int^\tau_0[S^1(t)-S^0(t)]dt.$$

Since $Z^A=\min (T^A, \tau)$ and $\tau$ is a fixed constant,  $(T^0,T^1)\indep A|X$ implies  $(Z^{0},Z^{1})\indep\ A|X$. Following Theorem 3 in \citet{cps}, we can establish the strongly ignorability based on propensity score $e(X)=P(A=1|X)$ for survival outcomes in proposition \ref{note_prop} (proof provided in Web Appendix A).

\begin{proposition}
\label{note_prop}
Given assumptions \ref{sutva}-\ref{strong_ignore}, we have $(T^0,T^1)\indep A|e(X)$, which further implies $(Z^0,Z^1)\indep A|e(X)$.
\end{proposition}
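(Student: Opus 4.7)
The plan is to follow the classical Rosenbaum--Rubin (1983) argument that the propensity score is a balancing score, then lift this from covariates to potential outcomes using strong ignorability, and finally use a deterministic-function argument to transfer the independence from $T^A$ to $Z^A$.

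First I would establish that $e(X)$ is a balancing score, i.e., $A \indep X \mid e(X)$. Since $e(X)$ is a function of $X$, one has $P(A=1 \mid X, e(X)) = P(A=1 \mid X) = e(X)$. Taking conditional expectations with respect to $e(X)$ gives $P(A=1 \mid e(X)) = E[e(X) \mid e(X)] = e(X)$. Therefore $P(A=1 \mid X, e(X)) = P(A=1 \mid e(X))$, which is the claimed conditional independence of $A$ and $X$ given $e(X)$.

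Next I would show that $(T^0, T^1) \indep A \mid e(X)$ using Assumption \ref{strong_ignore}. The key identity is
\[
P(A=1 \mid T^0, T^1, e(X)) = E\!\left[P(A=1 \mid T^0, T^1, X) \,\big|\, T^0, T^1, e(X)\right].
\]
Strong ignorability reduces the inner probability to $P(A=1 \mid X) = e(X)$, and since $e(X)$ is measurable with respect to the conditioning sigma-field on the right, the outer expectation equals $e(X) = P(A=1 \mid e(X))$. This yields $(T^0, T^1) \indep A \mid e(X)$.

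Finally, because $Z^A = \min(T^A, \tau)$ is a deterministic (Borel measurable) function of $T^A$ with $\tau$ fixed, the pair $(Z^0, Z^1)$ is a measurable function of $(T^0, T^1)$. Independence is preserved under measurable transformations of the conditioned-on random vector, so $(T^0, T^1) \indep A \mid e(X)$ immediately implies $(Z^0, Z^1) \indep A \mid e(X)$. There is no serious obstacle here; the only subtlety worth writing carefully is the tower-property manipulation, making sure that $e(X)$ is indeed $\sigma(T^0, T^1, e(X))$-measurable so that it can be pulled out of the conditional expectation.
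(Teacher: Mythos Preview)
Your proposal is correct and follows essentially the same argument as the paper: the paper's proof likewise computes $P(A=1\mid T^0,T^1,e(X))$ via the tower property, uses strong ignorability to reduce the inner conditional to $e(X)$, identifies this with $P(A=1\mid e(X))$, and then invokes the deterministic map $Z^A=\min(T^A,\tau)$ to carry the independence over to $(Z^0,Z^1)$. The only cosmetic difference is that you derive the balancing-score identity $P(A=1\mid e(X))=e(X)$ from scratch, whereas the paper cites it from \citet{cps}.
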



\subsection{Matched RMST Difference Estimator}
In randomized trials, the marginal causal effect of treatment on RMST can be asymptotically unbiasedly estimated \citep{fleming2011counting} by direct contrast of group-specific RMST estimates since confounding effects are eliminated by design. In observational studies, however, additional adjustments are needed for confounding control. Propensity score based approaches are popular for this purpose, which may take the form of matching, stratification, or weighting \citep{cps, bang2005doubly}. Among different propensity score adjustment strategies, matching is a design tool that selects comparable control units to match with treated units and it often results in more robust causal effect estimates as it does not rely on outcome model specification. Usually, matching uses all treated and a subset of control units, so it estimates the average treatment effect on the treated (ATT) \citep{imbens2015causal}. 

Our proposed propensity score matched RMST estimation includes the following steps: \\
(1) \textit{Propensity Score Estimation} \\
The propensity score is defined as the conditional probability of treatment given a vector of observed covariates \citep{cps}. We estimate the propensity score by fitting a logistic regression on $A$ with $X$, though other estimation options, either parametric or nonparametric, are also available \citep{mccaffrey2004propensity,westreich2010propensity}.\\
(2) \textit{Propensity Score Matching} \\
We use the optimal matching algorithm by \citet{Roptmatch} to create pair matches without replacement based on the estimated propensity score, the unmatched controls will be removed from the matched sample. Matching quality is assessed by checking the post-matching covariate balance. Any substantial covariate imbalance would lead to a recalibration of the propensity score model. We will proceed to the next step only after a satisfactory balance is achieved.  \\
(3)\textit{Treatment Effect Estimation}\\
Suppose we obtain $n$ pairs of data through matching, where each pair contains exactly one treated and one control subject. We estimate the RMST, $\mu(\tau)$, by $\hat{\mu}(\tau)=\int^\tau_0\hat{S}(t)d(t)$, where $\hat{S}(t)$ is estimated by the nonparametric KM method. Let $\hat{S}^0(t)$ and $\hat{S}^1(t)$ denote the KM estimates of survival function for control and treated groups in the matched sample, respectively. Based on the matched sample, our estimator for averaged treatment effect on the treated (ATT) is 
$$\hat{\Delta}_{ATT}=\hat{\mu}^1(\tau)-\hat{\mu}^0(\tau)=\int^\tau_0[\hat{S}^1(t)-\hat{S}^0(t)]dt.$$


The following two propositions show that the matched RMST difference estimator is asymptotically unbiased (both proofs are provided in Web Appendix A).

\begin{proposition}
\label{RMST_unbias}
Given assumptions \ref{sutva}-\ref{tau_assump}, the RMST estimator based on KM method given propensity score $e(X)$ and treatment group $A$, denoted as $\hat{\mu}_{e(X),A}$, is an asymptotically unbiased estimator for $\mu_{e(X),A}$ given $\tau<t_{\max}$.
\end{proposition}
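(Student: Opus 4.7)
The plan is to reduce the claim to the classical asymptotic unbiasedness of the Kaplan--Meier estimator under independent censoring, with the role of covariates played by the propensity score. First I would leverage Proposition \ref{note_prop} to pass from conditioning on $X$ to conditioning on $e(X)$: since $(Z^0,Z^1)\indep A\mid e(X)$, within the sub-population with propensity score value $e(X)=e$ and treatment group $A=a$ the distribution of $Z^A$ coincides with the marginal distribution of $Z^a$ within the $e(X)=e$ stratum. A parallel argument, using Assumption \ref{indep_censor} and the fact that $e(X)$ is a function of $X$, will give the analogous censoring reduction $C\indep Z^A\mid (A,e(X))$, so that within each $(e(X),A)$ cell the observed pair $(Y^A,\delta^A)$ is exactly a right-censored sample from $Z^A\mid e(X)$ with censoring independent of the event time.

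Next I would invoke the standard theory of the Kaplan--Meier estimator (as in Fleming and Harrington, 2011) on this stratum: under independent censoring, $\hat S^A(t\mid e(X))$ is pointwise asymptotically unbiased for $S^A(t\mid e(X))$ for every $t$ at which the at-risk function is positive in the limit. Assumption \ref{tau_assump}, $\tau<t_{\max}$, is precisely what guarantees a positive limiting at-risk probability throughout $[0,\tau]$, so the pointwise asymptotic unbiasedness extends up to $t=\tau$.

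Finally I would pass from survival-function unbiasedness to RMST unbiasedness. Because $0\le \hat S^A(t\mid e(X))\le 1$ on $[0,\tau]$, I can apply Fubini's theorem to interchange expectation and integration,
\[
E\bigl[\hat\mu_{e(X),A}\bigr]=\int_0^\tau E\bigl[\hat S^A(t\mid e(X))\bigr]\,dt,
\]
and then apply the bounded convergence theorem to conclude
\[
E\bigl[\hat\mu_{e(X),A}\bigr]\longrightarrow \int_0^\tau S^A(t\mid e(X))\,dt=\mu_{e(X),A}.
\]

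The main obstacle I anticipate is the censoring step: Assumption \ref{indep_censor} is stated conditionally on $(A,X)$, not on $(A,e(X))$, so one has to verify carefully that coarsening to the propensity score preserves conditional independence of $C$ and $Z^A$. This is not automatic in the way Proposition \ref{note_prop} is, because the censoring mechanism can in principle depend on components of $X$ not captured by $e(X)$; I would either strengthen the argument by noting that $e(X)$ is a balancing score for the joint law of $(Z^A,C)$ whenever it is balancing for each marginal and censoring is conditionally independent given $(A,X)$, or, if necessary, impose the standard assumption that $C$ depends on $X$ only through $e(X)$. Secondary technical care is needed near $t=\tau$ to ensure the pointwise limit is integrable, which is handled by Assumption \ref{tau_assump} combined with the uniform bound $\hat S\le 1$.
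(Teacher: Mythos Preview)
Your approach is essentially the same as the paper's: invoke the Fleming--Harrington bias bound for the Kaplan--Meier estimator within the $(e(X),A)$ stratum, then pass to the RMST via Fubini. The only cosmetic difference is that the paper controls the integrated bias with the explicit geometric bound $\int_0^\tau E[\hat S-S]\,dt\le \tau[1-\pi_{e(X),A}(\tau)]^n\to 0$ (their Lemma~\ref{rmst_lemma}), whereas you appeal to pointwise asymptotic unbiasedness plus bounded convergence; both routes work and yield the same conclusion.

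Your flagged obstacle about censoring is well taken and, in fact, is something the paper's own proof silently assumes rather than verifies: Assumption~\ref{indep_censor} gives $C\indep T^A\mid (A,X)$, not $C\indep T^A\mid (A,e(X))$, and coarsening the conditioning set does not in general preserve conditional independence. The paper proceeds as if the stratum-level KM satisfies the Fleming--Harrington hypotheses without addressing this. So your instinct to either strengthen the argument or note the implicit extra assumption (censoring depends on $X$ only through $e(X)$) is appropriate and arguably more careful than the paper itself.
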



\begin{proposition}
\label{match_rmst_unbias}
Given assumptions \ref{sutva}-\ref{tau_assump},  $\hat{\Delta}_{ATT}$ is asymptotically unbiased.
\end{proposition}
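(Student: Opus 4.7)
The plan is to decompose $\Delta_{ATT}$ using iterated expectations conditional on the propensity score, and then to combine Propositions \ref{note_prop} and \ref{RMST_unbias} with the asymptotic balance of $e(X)$ produced by optimal pair matching.

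First, I would write
\[
\Delta_{ATT} = E[Z^1 \mid A=1] - E[Z^0 \mid A=1]
= E_{e(X)\mid A=1}\bigl\{E[Z^1\mid e(X),A=1] - E[Z^0\mid e(X),A=1]\bigr\}.
\]
By Proposition \ref{note_prop}, $Z^0 \indep A \mid e(X)$, so $E[Z^0\mid e(X),A=1] = E[Z^0\mid e(X),A=0]$. Hence the counterfactual piece $E[Z^0\mid A=1]$ becomes $E_{e(X)\mid A=1}\{E[Z^0\mid e(X),A=0]\}$, which is a control-arm quantity averaged against the treated propensity-score distribution.

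Second, I would invoke Proposition \ref{RMST_unbias}: within any stratum defined by $e(X)$ and $A$, the KM-based RMST is asymptotically unbiased for $\mu_{e(X),A}$. Because optimal matching without replacement retains every treated unit, $\hat\mu^1(\tau)$ targets $E[Z^1\mid A=1]$ directly. For the control arm, the strategy is to argue that matching on a consistently estimated $\hat e(X)$ yields a matched-control empirical distribution of $e(X)$ converging (Glivenko--Cantelli style) to the distribution of $e(X)\mid A=1$. Then the pooled KM-based $\hat\mu^0(\tau)$ is asymptotically unbiased for $E_{e(X)\mid A=1}\{E[Z^0\mid e(X),A=0]\}$, which by the decomposition equals $E[Z^0\mid A=1]$, and the difference yields the claim.

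The main obstacle will be making the matched-sample convergence precise. The propensity score is estimated rather than known, so one must argue that matching on $\hat e(X)$ yields the same limiting covariate balance as matching on the true $e(X)$, and that near-exact pairing on $e(X)$ forces the matched controls to inherit the treated propensity-score distribution in the limit. Once uniform convergence of the matched propensity-score distribution is in hand, Assumptions \ref{indep_censor}--\ref{tau_assump} guarantee that each stratum-level KM estimator is well behaved on $[0,\tau]$, and a bounded-convergence argument lifts asymptotic unbiasedness from the stratum level to the $\tau$-truncated integral defining $\hat\Delta_{ATT}$.
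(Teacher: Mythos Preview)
Your decomposition and overall strategy are sound, but the paper takes a somewhat different and more streamlined route. Instead of arguing that the matched-control propensity-score distribution converges to the treated distribution and then aggregating stratum-level estimators, the paper isolates this balancing step as a standalone lemma (Lemma~\ref{marginal_ignorable}): under pair matching on $e(X)$ with a fixed allocation ratio, $P(A=1\mid e(X))$ is constant across the matched sample, so $e(X)\indep A$ there, and hence $(Y^1,Y^0)\indep A$ holds \emph{marginally} in the matched sample. With marginal ignorability established, the paper applies a direct asymptotic-unbiasedness result for the pooled KM estimator (Lemma~\ref{prop:asyunbias_lemma}, via Fleming--Harrington's Lemma 3.2.1) to $\hat S^1$ and $\hat S^0$, then uses Fubini to exchange expectations with the $[0,\tau]$ integral. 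Your stratify-then-aggregate approach through Proposition~\ref{RMST_unbias} plus a Glivenko--Cantelli argument reaches the same destination but is more laborious and requires care in passing from stratum-level KM estimators to the pooled KM (these are not a priori the same object). The paper's route avoids this by working directly at the marginal level once Lemma~\ref{marginal_ignorable} is in hand. Your worry about the estimated versus true propensity score is legitimate, but note that the paper's proof also treats matching as exact on the true $e(X)$ and does not address estimation error; so this is not a gap relative to the paper's own argument.
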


\subsection{Variance Estimation}
The matching process may introduce correlation between the two subjects in the same pair, as they are matched on similar propensity scores. Therefore, the variance calculation of $\hat{\Delta}_{ATT}$ needs to account for such correlation: 
$$var(\hat{\Delta}_{ATT})=var[\int^\tau_0\hat{S}^0(t_0)dt_0]+var[\int^\tau_0\hat{S}^1(t_1)dt_1]-2cov[\int^\tau_0\hat{S}^0(t_0)dt_0,\int^\tau_0\hat{S}^1(t_1)dt_1].$$
The overall variance has two components, the marginal variance of RMST estimates and their covariance. For two dependent event times with independent censoring and no competing risk, \citet{murray2000variance} provided closed-form asymptotic covariance formulas for KM survival estimates and corresponding RMST estimates. To address the dependence structure introduced in the matching process, we adapt their formulas to compute the covariance between control and treated group RMST estimates in the matched sample. 

Specifically, let $T_0$ be the event time for a subject from the control group with marginal hazard function $h_0(\cdot)$, and $T_1$ be the event time for a subject from the treatment group with marginal hazard function $h_1(\cdot)$, then the event times for a matched pair of control and treated can be denoted as $(T_0, T_1)$. Let $C$ be the censoring variable, then the observed time can be denoted as $\tilde{T}_0=\min (T_0, C)$ for control group with censoring indicator $\delta_0=I(T_0<C)$, and $\tilde{T}_1=\min (T_1, C)$ for treated group with censoring indicator $\delta_1=I(T_1<C)$. Then, the joint hazard function is 
$h_{ij}(u,v)=\underset{\Delta u,\Delta v\rightarrow 0}{\lim}\frac{1}{\Delta u\Delta v}P(u\leq \tilde{T}_i<u+\Delta u, v\leq \tilde{T}_j<v+\Delta v,\delta_i=1,\delta_j=1|\tilde{T}_i\geq u, \tilde{T}_j\geq v)$ where $i,j\in\{0,1\}$, and the conditional hazard function is $h_{i|j}(u|v)=\underset{\Delta u\rightarrow 0}{\lim}\frac{1}{\Delta u}P(u\leq \tilde{T}_i<u+\Delta u,\delta_i=1|\tilde{T}_i\geq u, \tilde{T}_j\geq v)$ where $i,j\in\{0,1\}$. Then, the covariance between two RMSTs can be computed as
\begin{align} 
&cov[\int^\tau_0\hat{S}^0(t_0)dt_0,\int^\tau_0\hat{S}^1(t_1)dt_1] \nonumber \\
=&\frac{1}{n}\int^\tau_0\int^\tau_0\hat{S}^0(t_0)\hat{S}^1(t_1)\int^{t_0}_0\int^{t_1}_0G_{01}(u,v)dvdudt_0dt_1 \nonumber \\
=&\frac{1}{n}\int^\tau_0\int^\tau_0\left[\int^\tau_v\hat{S}^0(t)dt\right]\left[\int^\tau_u\hat{S}^1(t)dt\right]G_{01}(u,v)dvdu \nonumber
\end{align}
where 
$G_{01}(u,v)=\frac{P(\tilde{T}_0\geq u, \tilde{T}_1\geq v)}{P(\tilde{T}_0\geq u)P(\tilde{T}_1\geq v)}[h_{01}(u,v)-h_{0|1}(u|v)h_1(v)-h_{1|0}(v|u)h_0(u)+h_0(u)h_1(v)]$. Details about the computation of function $G_{01}(u,v)$ are included in Web Appendix C.\\
For the marginal variances, two methods may be considered:
 \begin{enumerate}
 \item Murray's Method: the above covariance formulas can be used to compute the marginal variance, since the marginal variance of RMST could be written as the covariance with itself, that is $var(\int^\tau_0\hat{S}(t)dt)=cov[\int^\tau_0\hat{S}(t)dt,\int^\tau_0\hat{S}(t)dt]$. 
 \item Hosmer's Method: we may also consider the computation method introduced in \citet{hosmer2011applied}.
Let $t_1<t_2<\cdots<t_D$ represent distinct event times. For each $k=1,\cdots,D$, let $Y_k$ be the number of surviving units just prior to event time $t_k$, and let $d_k$ be the number of events at $t_k$.
Let $\hat{S}(t_k)=\overset{k}{\underset{l=1}{\prod}}(1-\frac{d_l}{Y_l})$ denotes the KM estimate of the survival function at event time $t_k$, and let $N_\tau$ be the number of $t_k$ values that are less than truncation time point $\tau$, then the RMST is estimated by
$$\int^\tau_0\hat{S}(t)dt=\overset{N_\tau}{\underset{k=1}{\sum}}\hat{S}(t_{k-1})(t_k-t_{k-1})+\hat{S}(t_{N_\tau})(\tau-t_{N_\tau})$$ 
and the marginal variance of RMST can be estimated as
$$var(\int^\tau_0\hat{S}(t)dt)=\frac{m}{m-1}\overset{N_\tau}{\underset{k=1}{\sum}}\frac{d_kA^2_k}{Y_k(Y_k-d_k)}$$
where
$A_k=\int^\tau_{t_k}\hat{S}(t)dt=\overset{N_\tau}{\underset{l=k}{\sum}}\hat{S}(t_l)(t_{l+1}-t_l)+\hat{S}(t_{N_\tau})(\tau-t_{N_\tau})$ and $m=\overset{N_\tau}{\underset{l=1}{\sum}}d_l$.
 \end{enumerate}
In our simulation studies, we present the variance estimates under Murray's method since the results from these two methods turn out to be very close. 

\section{Simulation Studies}
\label{s:simu}
\subsection{Data Generation}
To assess the empirical performance of the proposed method, we simulate an observational dataset with known confounders. Several existing methods for causal inference with survival outcomes are compared.

We generate ten independent baseline covariates denoted by $X_1$ to $X_{10}$. Among them, $X_1, X_3,\cdots, X_9$ are five binary covariates following Bernoulli distribution with parameters 0.2, 0.4, 0.6, 0.8, 0.5, respectively, and $X_2,X_4,\cdots, X_{10}$ are five continuous covariates following standard normal distribution. We then generate potential survival time  $T^1$ as the outcome under treatment and potential survival time $T^0$ as the outcome under control from Weibull distribution \citep{bender2005generating}. Specifically, we simulate a uniform random variable $U$ on [0,1] then generate the potential survival time as below,
$$T^j=(-\frac{log(U)}{\lambda_{0j}\exp(\beta_Aj+X_1+1.2X_4+1.4X_6+1.6X_7+1.6X_8+1.4X_9+1.2X_{10})})^{\frac{1}{\nu_j}}$$
where $A$ is the treatment indicator and $\beta_A$ is the conditional multiplicative treatment effect on the hazard function given covariates, and $\nu_j$ and $\lambda_{0j}$ are the shape and baseline scale parameters of Weibull distribution for treatment group $j$, respectively. 
When $\nu_0=\nu_1$, we have proportional hazards model, otherwise the model is non-proportional hazards. The treatment indicator $A$ is generated from Bernoulli distribution with $P(A=1)$ defined by the logistic model $\text{logit}(P(A=1))=-1.95+\log(1.2)X_1+\log(1.1)X_2+\log(1.4)X_3+\log(1.2)X_4+\log(1.6)X_5+\log(1.3)X_6+\log(1.8)X_7$. Thus, $X_1, X_4, X_6$ and $X_7$ are true confounders. This setup allows about 20$\%$ of the population to be exposed to treatment.

The censoring variable $C$ is generated from an exponential distribution with rate parameter $\theta$, where $\theta=\gamma\exp(0.2X_4+0.1X_7)$ and $\gamma$ is the baseline rate parameter. The true event time is $T=T^0(1-A)+T^1A$.
Let $\tau$ be the pre-specified truncation time point, we generate the restricted event time $Z=\min(T,\tau)$ and the observed restricted time $Y=\min (Z,C)=\min (T,C,\tau)$. The restricted event time $Z$ is censored if the observed time $C<Z$ with censoring status $\delta_Z=I(Z<C)$, otherwise it is non-censored. 

We simulate 500 datasets of sample size 2500 for each scenario and set the truncation time point $\tau$ to 100. The true RMST difference is determined by calculating the empirical difference between the potential RMSTs under treated and control conditions, and we compute both ATT and ATE versions of true RMST difference to serve as benchmarks for different methods as appropriate. In the $j^{th}$ simulated dataset, we calculate $\Delta_j=\sum_{i=1}^{n}\frac{Z_i^1-Z_i^0}{n}$, where $Z_i^A=\min(T_i^A,\tau)$ is the potential restricted event time for the $i$th individual and $n$ is the sample size of the treated group (for ATT) or the entire sample (for ATE). Then, the true marginal effect on RMST is calculated as $\Delta_0=\sum_{j}^{500}\frac{\Delta_j}{500}$.

Both proportional hazards (PH) and non-proportional hazards (NP) settings are examined. Under both settings, we set $\beta_A$ to five different values: $0, -0.4, -0.8, -1.2,-2$.  For each treatment effect value, we also consider four different levels of 
censoring rates (CR), which are $0\%, 20\%, 40\%, 60\%$. Detailed parameter setup for PH and NP scenarios in observational studies are summarized in Table \ref{tab:obs_setup}.

\subsection{Estimation Strategies}
The proposed method is compared with three existing estimation strategies:
\begin{enumerate}
\item \textit{Propensity score matched RMST estimation.}
This is our proposed propensity score matched estimation as described in the previous section, and the propensity score is estimated from the true propensity score model. The estimated treatment effect is compared to the ATT version of the true RMST difference in our simulations. 
\item \textit{Conner's IPTW RMST estimation.}\\
This method is proposed by \citet{conner2019adjusted}, and they estimated the RMST based on inverse probability treatment weighting (IPTW) adjusted Kaplan-Meier estimator. In our simulation, we use the ATT version of weight to adjust for observed confounding, so it is compared to the true ATT RMST difference. The propensity score is estimated from the true propensity score model.
\item \textit{RMST regression.}\\
This method is proposed by \citet{tian2014predicting}, which uses the IPCW estimating equation with identity link function to estimate treatment effect on RMST with adjustment for covariates. The estimated treatment effect is compared to ATE version of the true RMST difference. We consider four different outcome models in the RMST regression :  (1) outcome model using the treatment indicator only; (2) outcome model using the true covariate set; (3) outcome model using all covariates; (4) outcome model using a wrong covariate set. Due to space limitations, only results of RMST regression with true covariates are summarized in the following section, which has the best performance among the four models. An important caveat is that the RMST regression model with the true covariate set does not represent the true outcome model since the data are generated based on a hazard model. 
\item \textit{Inverse Probability Treatment Weighting (IPTW) Cox regression.}\\
This method estimates $\beta_A$. The propensity score is estimated from the true propensity score model. We use the ATT weight to fit a weighted Cox regression model and regard $\beta_A$ as the truth to calculate the bias and coverage probabilities since there is no single value true marginal hazard ratio. We consider four different outcome models in the IPTW Cox regression :  (1) outcome model using treatment indicator only; (2) outcome model using the true covariate set; (3) outcome model using all covariates; (4) outcome model using a wrong covariate set. Due to space limitations, only results of IPTW Cox regression with the true covariate set are summarized in the following section, which has the best performance among the four models. We understand that the results here are not directly comparable to the first three methods, as they are based on different effect measures. Due to the high popularity of the IPTW Cox model in practice, however, we think there is some value in presenting the results as a reference.   
\end{enumerate}

\subsection{Performance Assessment}
We summarize treatment effect estimates from 500 Monte Carlo iterations into four measures: 
(1) percentage bias (Bias $\%$), which is the bias divided by the true value for nonzero treatment effect scenarios. For the zero treatment effect scenario, we just report the bias. The bias is computed as the average of 500 treatment effect estimates minus the truth; (2) coverage probability (CP), which is the proportion of 500 95$\%$ confidence intervals that cover the truth; (3) model-based standard error (SEM), which is the average of the 500 estimated standard errors from the model-based formula; (4) empirical standard error (SEE), which is the standard error of the 500 point estimates of treatment effect.

\subsection{Results}


Simulation results under PH setting are summarized in Table \ref{tab:obsph-1}. The proposed matched RMST method generates unbiased estimates of the target parameters under most scenarios, and the coverage probabilities are around 95$\%$. For a small effect size ($\beta_A=-0.4$), the bias is a bit large for a high censoring rate. Conner's method has a similar performance, with moderately larger biases. Averaging across all scenarios, bias from Conner's method is 65\% higher than our method.  
The results of the IPTW Cox model are mostly good since we use the correct outcome model.  The coverage probability may be a bit lower than the nominal level, sometimes, which may be due to the underestimated standard error.  The RMST regression method shows a relatively large percentage bias and lower coverage probability, especially under scenarios with large treatment effects. This is likely due to the incorrect covariate functional form specification in the model even though we include the right covariate set.

Simulation results under NP setting are summarized in Table \ref{tab:obsnp-1}. Both our matched RMST method and Conner's method have similar performance (with the latter having more bias) as under the PH setting since these methods do not rely on the PH assumption. The RMST regression method performs somewhat worse, with a bigger bias and much lower than ideal coverage probabilities. Because the PH assumption does not hold here, the IPTW Cox model completely misses the target with large bias and very small coverage probabilities. 


\section{Sensitivity Analysis Based on Matched Design}
\label{s:sensit}

\subsection{An Overview of E-value}

Propensity score adjustment can only control for observed confounding. Unmeasured confounding is likely to be present in observational studies since researchers have no control over the treatment assignment. Thus, sensitivity analysis is important to assess the impact of hidden bias.

Ding and VanderWeele \citep{ding2016sensitivity,vanderweele2017sensitivity} developed a new sensitivity analysis strategy, known as the E-value method. It assumes a hypothetical unmeasured confounder, $U$, and provides a lower bound of the strength of association on the risk ratio scale that $U$ would have to have with both the exposure and the outcome, to explain away the
observed association. Below is a brief review of the conventional E-value method to set the stage for our sensitivity analysis of RMST difference. 

Let $E$ denote a binary exposure and $D$ denote a binary outcome, $e(X)$ is a vector of measured confounders and $U$ is a binary unmeasured confounder with levels $k=0,1$. The observed relative risk of exposure $E$ on the outcome $D$ within stratum of $e(X)=e(x)$ is  
$$\text{RR}_{ED|e(x)}^{obs}=\frac{P(D=1|E=1, e(X)=e(x))}{P(D=1|E=0, e(X)=e(x))}.$$
Then the relative risk of exposure on level $k$ of the unmeasured confounder $U$ within stratum of $e(X)=e(x)$ is 
$$\text{RR}_{EU,k|e(x)}=\frac{P(U=k|E=1,e(X)=e(x))}{P(U=k|E=0, e(X)=e(x))}.$$
Since $U$ is not observed, to facilitate the analysis, we take the maximal relative risk of $E$ on $U$ within stratum $e(X)=e(x)$, denoted as $RR_{EU|e(x)}=\underset{k}{\max}RR_{EU,k|e(x)}$. Similarly, we can define an upper bound for the relative risk between $U$ and $D$ as $\text{RR}_{UD|e(x)}=\max(\text{RR}_{UD|E=0,e(x)},\text{RR}_{UD|E=1,e(x)})$, where $\text{RR}_{UD|E,e(x)}$ is an upper bound of the relative risk between $U$ and $D$ in exposed or unexposed group respectively, within stratum $e(X)=e(x)$. If $e(X)$ and $U$ are sufficient to control for all confounding effects, the true causal relative risk is
$$\text{RR}^{true}_{ED|e(x)}=\frac{\sum^{1}_{k=0}P(D=1|E=1, e(X)=e(x), U=k)P(U=k|e(X)=e(x))}{\sum^{1}_{k=0}P(D=1|E=0, e(X)=e(x), U=k)P(U=k|e(X)=e(x))}.$$
The relative risk pair $(\text{RR}_{EU|e(x)},\text{RR}_{UD|e(x)})$ are used to measure the strength of confounding between the exposure $E$ and the outcome $D$ induced by the confounder $U$ within the stratum of $e(X)=e(x)$. Even though we cannot estimate the true relative risk, its ratio with the observed relative risk is bounded by the following quantity, which is a function of the sensitivity parameters $\text{RR}_{EU|e(x)}$ and $\text{RR}_{UD|e(x)}$.
$$\frac{\text{RR}^{obs}_{ED|e(x)}}{\text{RR}^{true}_{ED|e(x)}}\leq\frac{\text{RR}_{EU|e(x)}\times \text{RR}_{UD|e(x)}}{\text{RR}_{EU|e(x)}+\text{RR}_{UD|e(x)}-1}$$

For given values of $\text{RR}_{EU|e(x)}$ and $\text{RR}_{UD|e(x)}$, we can identify a range of possible values for the true relative risk. If the range covers one, the observed significant association would be explained away by the presence of unmeasured confounding at the given magnitude.  



\subsection{Sensitivity Analysis on RMST difference with matched data}
\label{rmst_sa}
This E-value method can be adapted to conduct sensitivity analysis for our RMST difference estimator in matched design. There are a series of propositions to justify the theoretical validity of using the E-value for the RMST difference estimator. In the interest of space, we just illustrate the main idea in this subsection and leave the propositions and their detailed proofs in Web Appendix B. 

Let $A$ be the treatment indicator and $Z=\min (T,\tau)$ be the RMST outcome, where $T$ is the event time and $\tau$ is the truncation time point. Let $e(X)$ be the propensity score and $U$ be a binary unmeasured confounder with levels $k=0,1$. 
The relative risk of treatment $A$ on level $k$ of the unmeasured confounder $U$ with a given propensity score value $e(X)=e(x)$ is defined as
$$RR_{AU,k|e(x)}=\frac{P(U=k|A=1,e(X)=e(x))}{P(U=k|A=0, e(X)=e(x))}.$$
The maximal relative risk of $A$ on $U$ with $e(X)=e(x)$ is 
$RR_{AU|e(x)}=\underset{k}{\max}RR_{AU,k|e(x)}$.
We define the expectations of the RMST outcome $Z$ given $U=u$ and $e(X)=e(x)$ with and without treatment as 
$$r_1(u)=E(Z|A=1,U=u, e(X)=e(x)),$$
$$r_0(u)=E(Z|A=0,U=u, e(X)=e(x)).$$

Then, the mean ratios of $U$ on $Z$ with and without treatment with $e(X)=e(x)$ are defined as 
$$MR_{UZ|A=1,e(X)=e(x)}=\frac{\underset{u}{\max}r_1(u)}{\underset{u}{\min}r_1(u)},\ MR_{UZ|A=0,e(X)=e(x)}=\frac{\underset{u}{\max}r_0(u)}{\underset{u}{\min}r_0(u)},$$
$$MR_{UZ|e(X)=e(x)}=\max (MR_{UZ|A=1,e(X)=e(x)},MR_{UZ|A=0,e(X)=e(x)}).$$
As shown in proposition 4 in Web Appendix B, both unmeasured confounder parameters $RR_{AU|e(X)=e(x)}$ and $MR_{UZ|e(X)=e(x)}$ are no less than 1. Then we can identify the bounding factor as
\begin{equation}\label{eq:bf}
BF_{U|e(X)=e(x)}=\frac{RR_{AU|e(X)=e(x)}\times MR_{UZ|e(X)=e(x)}}{RR_{AU|e(X)=e(x)}+MR_{UZ|e(X)=e(x)}-1}
\end{equation}
where $(RR_{AU|e(X)=e(x)},MR_{UZ|e(X)=e(x)})$ are prespecified sensitivity analysis parameters. Then, we take the maximum of bounding factors across all propensity score values, defined as $BF_U^*=\underset{e(x)}{\max}(BF_{U|e(X)=e(x)})$, and the corresponding sensitivity analysis parameters in $BF_U^*$ are denoted as $(RR_{AU},MR_{UZ})$. Let $ACE_{AZ}^{true}$ denotes the average causal effect. When treatment effect is positive, we have
\begin{equation}\label{eq:sa_posi}
ACE_{AZ}^{true}\geq \frac{1}{2}(1+\frac{1}{BF_U^*})E(Z|A=1)-\frac{1}{2}(1+BF_U^*)E(Z|A=0).  
\end{equation}
When treatment effect is negative we have
\begin{equation}\label{eq:sa_nega}
ACE_{AZ}^{true}\leq \frac{1}{2}(1+BF_U^*)E(Z|A=1)-\frac{1}{2}(1+\frac{1}{BF_U^*})E(Z|A=0).
\end{equation}

\subsection{Interpreting the Sensitivity Analysis} 

For an unmeasured confounding with prespecified magnitude of $(RR_{AU},MR_{UZ})$, the bounding factor $BF_U^*$ can be computed by equation (\ref{eq:bf}). For a positive treatment effect, the lower bound of the treatment effect can be computed by equation (\ref{eq:sa_posi}). A positive lower bound indicates that there is still a positive treatment effect with an unmeasured confounding effect of magnitude $(RR_{AU}, MR_{UZ})$. A non-positive lower bound indicates that the positive treatment effect could be explained away by the unmeasured confounding of magnitude $(RR_{AU}, MR_{UZ})$. For a negative treatment effect, the upper bound of the treatment effect can be computed by equation (\ref{eq:sa_nega}), and similar interpretations can be made. A negative upper bound indicates that there is still a negative treatment effect with an unmeasured confounding effect of magnitude $(RR_{AU}, MR_{UZ})$. A non-negative upper bound indicates that the negative treatment effect could be explained away by the unmeasured confounding of magnitude $(RR_{AU}, MR_{UZ})$. 


\section{Real Data Example}
\label{s:real}


In this section, we apply our proposed method to the ARIC data \citep{aric1989atherosclerosis} to examine the causal effect of baseline smoking on stroke-free survival. Incident ischemic stroke events or death, the primary outcome, are identified through December 31, 2011. After excluding a small portion of subjects with missing values in the variables of interest, the total sample size used in the analysis is 14,549. The event time is defined as the follow-up time (in months) for the first incident stroke or death, whichever comes first, and a subject is censored if neither incident stroke nor death is observed during the study. 
There are 5345 events, corresponding to a 63.3$\%$ censoring rate. Given the length of follow-up, we choose 240 months as the truncation time $\tau$ for the RMST calculation. Exposure is defined as the smoking status at baseline. There are 3,832 (26.3$\%$) current smokers at baseline. Eight important baseline covariates are included in the propensity score model: race (black, white), gender (male, female), age (44-66 yrs old), BMI (14.2-65.9), diabetes (1=yes, 0=no) , HDL (10-163 mg/dL), LDL (0-504.6 mg/dL), and hypertension (1=yes, 0=no). Table \ref{tab:c1} summarizes these variables by baseline smoking status.

We first fit a logistic regression model on baseline smoking status using the eight covariates to estimate the propensity score. Then, we conduct a 1-1 optimal pair matching without replacement for all subjects which results in 3832 pairs and unmatched nonsmokers are removed from the matched sample. The covariates balance is measured by the standardized mean difference, and Figure \ref{sa_balance} shows the covariates balance of ARIC data before and after propensity score matching, which indicates our matching achieves very good covariates balance.

For comparison purposes, the analysis results of the proposed method and the IPTW Cox regression method are both included. All results are summarized in Table \ref{tab:senana-2}. Both methods show significant evidence of a harmful effect of smoking on the risk of incident ischemic stroke or death. This conclusion agrees with previous findings in the literature. The matched RMST analysis suggests an average reduction of 22.3 stroke-free survival months for baseline smokers had they not smoked at the baseline. The IPTW Cox regression measures the treatment effect on the hazard ratio scale which is not directly comparable to RMST differences. The estimated HR of 2.2 implies that smoking increases the hazard of incident ischemic stroke or death.


All the above analyses assume ignorable treatment assignment. However, for such a large observational study, unmeasured confounding is likely to be present, especially given that we are only able to control a small number of factors. Therefore, it is important to assess how the observed causal effect may change in the presence of hidden bias. A sensitivity analysis, as described in section \ref{rmst_sa}, is carried out for different possible impacts of $U$ on the exposure and the outcome. Since the observed RMST difference is negative, we use equation (\ref{eq:sa_nega}) to conduct the sensitivity analysis, where we calculate the upper bound of the true causal RMST effect for different combinations of $RR_{AU}$ and $MR_{UZ}$. The results are depicted in the contour plot in Figure \ref{sa_plot}. The solid curve in the middle of the graph represents the contour where the upper bound is zero. The darker color region to the upper-right of the zero-curve indicates a positive upper bound of the true effect, and the lighter color region to the lower-left of the zero-curve indicates a negative upper bound of the true effect. 
For moderate deviations from the ignorability assumption ($RR_{AU}<1.47$ and $MR_{UZ}<1.47$), a harmful effect still holds, as the upper bound is below the solid zero-curve. For moderate-to-large deviations, the upper bound of the treatment effect may exceed zero, indicating a possibility of a null effect. For example, at (1.5, 1.5) the upper bound of the estimated treatment effect is 2.04, which indicates that the harmful treatment effect could be totally explained away by the unmeasured confounding of magnitude $(RR_{AU}, MR_{UZ})=(1.5, 1.5)$. Overall, our sensitivity analysis indicates that the observed significant causal effect is moderately robust to hidden bias.


\section{Discussion}
\label{s:discuss}

In this paper, we adopt the RMST difference as a marginal causal effect measure for survival data, since it is collapsible and has an easy interpretation. We develop a matching-based RMST difference estimator that is asymptotically unbiased and does not rely on the PH assumption. But this does not rule out the use of hazard in causal analysis with survival data. As pointed out by \citet{doescox}, the hazard function $h(t,x,z)$ may have a valid causal interpretation, if it satisfies some additive structural constraint. 

One limitation of our work is that the proposed nonparametric estimator may not be easily extended to more complex matching designs, such as 1-k or full matching designs. This is because we need to compute the covariance to account for the correlation in matched sets. But the covariance calculation relies on the assumption of equal sample sizes in both groups \citep{murray2000variance}. Therefore the covariance formula can not be applied directly to other matching designs. One strategy to relax this limitation is to consider fitting a parametric RMST regression model after matching. This could be more advantageous if we have a good idea about the outcome model specification, as it may correct residual confounding bias not captured by matching. This adds more flexibility to post-matching inference, as it can lead to more robust or efficient semiparametric strategies by combining matching with regression models \citep{Rubin1973theuse}.  It also makes our method more attractive in practice than Conner's method as the latter solely relies on KM estimation of survival functions and cannot include regression models.

\section{Acknowledgments}

This work was partially supported by grant DMS-2015552 from National Science Foundation.The Atherosclerosis Risk in Communities study has been funded in whole or in part with Federal funds from the National Heart, Lung, and Blood Institute, National Institutes of Health, Department of Health and Human Services, under Contract nos. (HHSN268201700001I, HHSN268201700002I, HHSN268201700003I, HHSN268201700005I, HHSN268201700004I). The authors thank the staff and participants of the ARIC study for their important contributions.


\bibliographystyle{unsrtnat}
\bibliography{matching_ref}

\section{Figures and Tables }

\begin{figure}[!p]
 \centerline{\includegraphics[width=0.8\textwidth]{./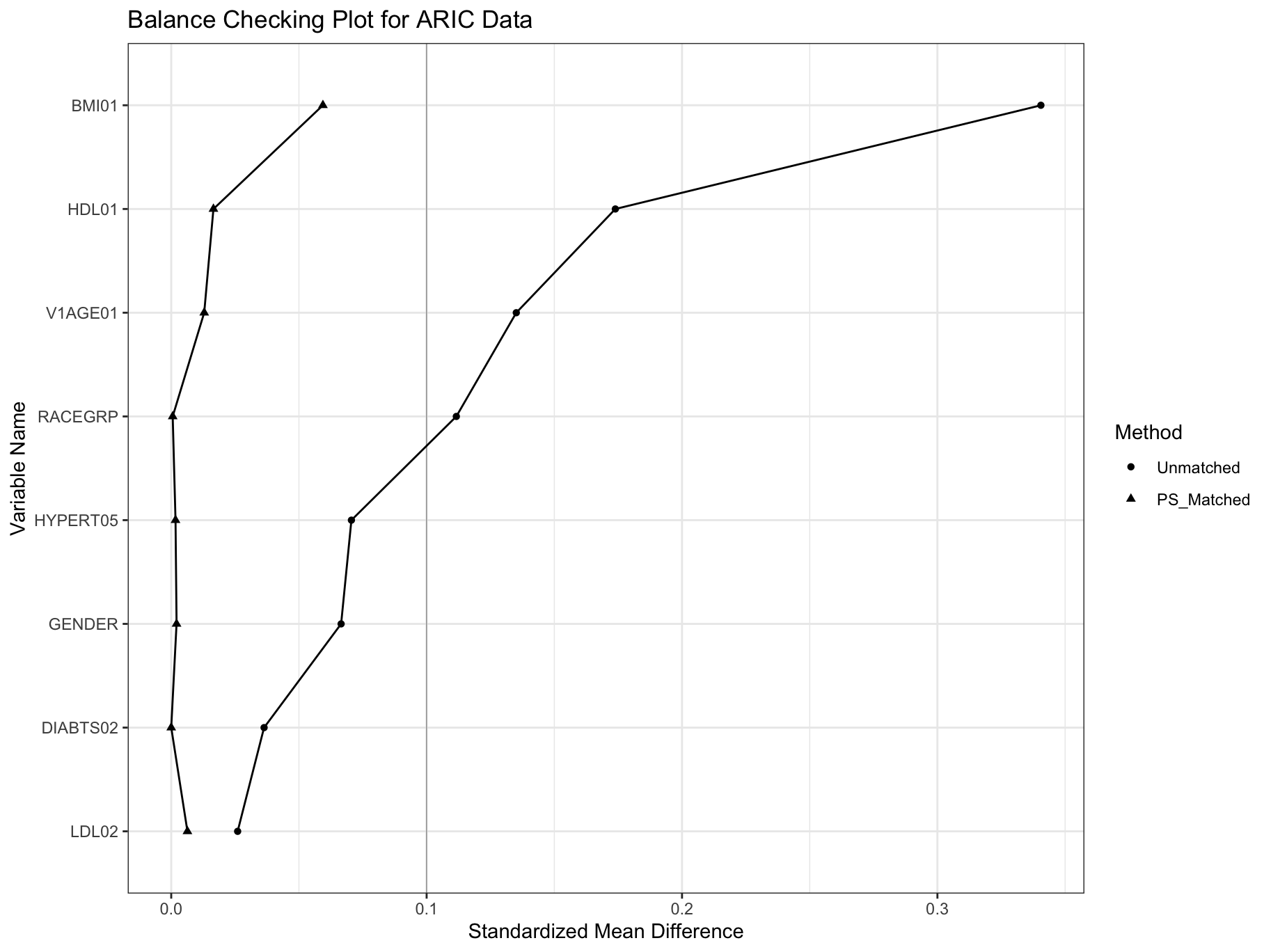}}
\caption{Covariates Balance Checking}
\label{sa_balance}
\end{figure}

\begin{figure}[!p]
 \centerline{\includegraphics[width=0.8\textwidth]{./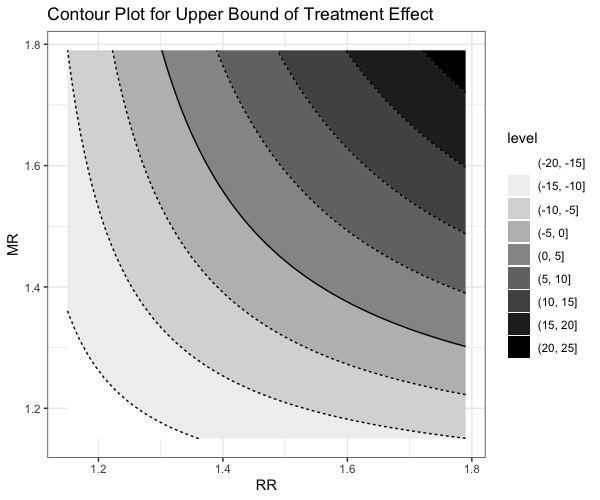}}
\caption{Contour plot of the upper bound of estimated treatment effect. The solid curve represents value 0.}
\label{sa_plot}
\end{figure}

\begin{table}[htbp]
  \centering
  \caption{Parameter Setup for Observational Studies Simulation Scenarios}
    \begin{tabular}{ccccccc}
    \toprule
    \multicolumn{2}{c}{PH Scenario Setup} & \multirow{2}[2]{*}{$\beta_A$} & \multicolumn{4}{c}{Censoring Parameter $\gamma$} \\
    ($\nu_0$,$\lambda_0$) & ($\nu_1$,$\lambda_1$) &       & 0\%   & 20\%  & 40\%  & 60\% \\
    \midrule
    (1, exp(-6)) & (1, exp(-6)) & 0     & 1.00E-08 & 0.0051 & 0.0142 & 0.0467 \\
    (1, exp(-6)) & (1, exp(-6)) & -0.4  & 1.00E-08 & 0.00462 & 0.0124 & 0.0345 \\
    (1, exp(-6)) & (1, exp(-6)) & -0.8  & 1.00E-08 & 0.00421 & 0.011 & 0.0272 \\
    (1, exp(-6)) & (1, exp(-6)) & -1.2  & 1.00E-08 & 0.00387 & 0.00992 & 0.0226 \\
    (1, exp(-6)) & (1, exp(-6)) & -2    & 1.00E-08 & 0.00322 & 0.00793 & 0.0165 \\
    \midrule
    \multicolumn{2}{c}{NP Scenario Setup} & \multirow{2}[2]{*}{$\beta_A$} & \multicolumn{4}{c}{Censoring Parameter $\gamma$} \\
    ($\nu_0$,$\lambda_0$) & ($\nu_1$,$\lambda_1$) &       & 0\%   & 20\%  & 40\%  & 60\% \\
    \midrule
    (1, exp(-6)) & (1, exp(-6)) & 0     & 1.00E-08 & 0.0049 & 0.01365 & 0.04351 \\
    (1, exp(-6)) & (1.5, 1.23E-04) & -0.4  & 1.00E-08 & 0.00346 & 0.00857 & 0.0179 \\
    (1, exp(-6)) & (1.5, 1.23E-04) & -0.8  & 1.00E-08 & 0.00323 & 0.00792 & 0.01602 \\
    (1, exp(-6)) & (1.5, 1.23E-04) & -1.2  & 1.00E-08 & 0.00306 & 0.00742 & 0.0146 \\
    (1, exp(-6)) & (1.5, 1.23E-04) & -2    & 1.00E-08 & 0.00278 & 0.00659 & 0.0126 \\
    \bottomrule
    \end{tabular}%
  \label{tab:obs_setup}%
\end{table}%

\begin{sidewaystable}
\small
\caption{Simulation Results under PH Scenario. Under zero treatment effect scenarios, bias is reported instead of percentage bias.}
\label{tab:obsph-1}
\begin{adjustbox}{scale=0.95,center}
\begin{tabular}{cccccccccccccccccc}
\hline
\multicolumn{2}{c}{\textbf{Scenario}} &
  \textbf{Bias\%} &
  \textbf{CP} &
  \textbf{SEM} &
  \textbf{SEE} &
  \textbf{Bias\%} &
  \textbf{CP} &
  \textbf{SEM} &
  \textbf{SEE} &
  \textbf{Bias\%} &
  \textbf{CP} &
  \textbf{SEM} &
  \textbf{SEE} &
  \textbf{Bias\%} &
  \textbf{CP} &
  \textbf{SEM} &
  \textbf{SEE} \\ \hline
\textbf{$\beta_A$} &
  \textbf{CR} &
  \multicolumn{4}{c}{Matched RMST (Murray)} &
  \multicolumn{4}{c}{RMST Regression} &
  \multicolumn{4}{c}{Conner's IPTW RMST} &
  \multicolumn{4}{c}{IPTW   Cox (HR)} \\ \hline
\multirow{4}{*}{0}    & 0   & 0.032   & 0.964 & 2.795 & 2.611 & -0.020   & 0.932 & 1.322 & 1.402 & 0.076   & 0.962 & 2.284 & 2.127 & 0.005    & 0.938 & 0.052 & 0.052 \\
                      & 0.2 & 0.140   & 0.954 & 2.884 & 2.686 & 0.124    & 0.938 & 1.478 & 1.542 & 0.165   & 0.956 & 2.355 & 2.218 & 0.005    & 0.924 & 0.067 & 0.070 \\
                      & 0.4 & 0.252   & 0.958 & 3.071 & 2.908 & 0.333    & 0.908 & 1.906 & 2.033 & 0.258   & 0.956 & 2.505 & 2.420 & 0.008    & 0.936 & 0.074 & 0.077 \\
                      & 0.6 & 0.288   & 0.953 & 4.079 & 4.023 & 0.877    & 0.772 & 4.450 & 6.684 & 0.344   & 0.947 & 3.369 & 3.439 & 0.013    & 0.928 & 0.085 & 0.088 \\ \hline
\multirow{4}{*}{-0.4} & 0   & 0.724\% & 0.970 & 2.795 & 2.588 & 2.746\%  & 0.944 & 1.326 & 1.369 & 1.581\% & 0.958 & 2.284 & 2.101 & -1.069\% & 0.944 & 0.053 & 0.052 \\
                      & 0.2 & 3.061\% & 0.958 & 2.875 & 2.646 & 5.384\%  & 0.938 & 1.475 & 1.487 & 3.393\% & 0.958 & 2.348 & 2.173 & -1.055\% & 0.936 & 0.069 & 0.072 \\
                      & 0.4 & 5.198\% & 0.962 & 3.032 & 2.826 & 8.447\%  & 0.934 & 1.846 & 1.906 & 5.198\% & 0.958 & 2.473 & 2.342 & -1.430\% & 0.946 & 0.076 & 0.077 \\
                      & 0.6 & 4.835\% & 0.958 & 3.647 & 3.564 & 18.611\% & 0.912 & 3.893 & 4.165 & 7.658\% & 0.958 & 2.980 & 2.908 & -2.195\% & 0.944 & 0.086 & 0.087 \\ \hline
                      
\multirow{4}{*}{-0.8} & 0   & 0.364\% & 0.970 & 2.785 & 2.570 & 3.819\%  & 0.938 & 1.329 & 1.330 & 0.794\% & 0.962 & 2.271 & 2.078 & -0.453\% & 0.944 & 0.055 & 0.053 \\
                      & 0.2 & 1.456\% & 0.962 & 2.859 & 2.657 & 4.645\%  & 0.942 & 1.475 & 1.460 & 1.703\% & 0.966 & 2.329 & 2.163 & -0.420\% & 0.938 & 0.072 & 0.074 \\
                      & 0.4 & 2.352\% & 0.966 & 2.991 & 2.721 & 6.364\%  & 0.938 & 1.805 & 1.820 & 2.359\% & 0.966 & 2.435 & 2.236 & -0.540\% & 0.940 & 0.080 & 0.080 \\
                      & 0.6 & 1.972\% & 0.960 & 3.404 & 3.297 & 4.613\%  & 0.942 & 3.236 & 3.375 & 3.756\% & 0.958 & 2.771 & 2.685 & -0.811\% & 0.942 & 0.089 & 0.089 \\ \hline
\multirow{4}{*}{-1.2} & 0   & 0.144\% & 0.966 & 2.765 & 2.550 & 4.838\%  & 0.934 & 1.333 & 1.304 & 0.481\% & 0.958 & 2.245 & 2.072 & -0.265\% & 0.948 & 0.058 & 0.057 \\
                      & 0.2 & 0.947\% & 0.962 & 2.832 & 2.621 & 5.196\%  & 0.930 & 1.476 & 1.442 & 1.145\% & 0.960 & 2.298 & 2.153 & -0.130\% & 0.934 & 0.077 & 0.078 \\
                      & 0.4 & 1.644\% & 0.966 & 2.947 & 2.640 & 5.883\%  & 0.924 & 1.778 & 1.753 & 1.700\% & 0.972 & 2.391 & 2.196 & -0.245\% & 0.940 & 0.084 & 0.084 \\
                      & 0.6 & 1.676\% & 0.968 & 3.252 & 3.045 & 4.337\%  & 0.924 & 2.849 & 2.946 & 2.473\% & 0.950 & 2.638 & 2.539 & -0.350\% & 0.942 & 0.093 & 0.094 \\ \hline
\multirow{4}{*}{-2}   & 0   & 0.077\% & 0.964 & 2.698 & 2.531 & 6.604\%  & 0.854 & 1.338 & 1.308 & 0.296\% & 0.960 & 2.160 & 2.042 & -0.098\% & 0.958 & 0.067 & 0.065 \\
                      & 0.2 & 0.407\% & 0.952 & 2.757 & 2.541 & 6.494\%  & 0.878 & 1.480 & 1.468 & 0.594\% & 0.956 & 2.206 & 2.077 & 0.069\%  & 0.950 & 0.089 & 0.087 \\
                      & 0.4 & 0.601\% & 0.966 & 2.848 & 2.615 & 6.062\%  & 0.906 & 1.748 & 1.776 & 0.748\% & 0.958 & 2.279 & 2.157 & -0.060\% & 0.954 & 0.098 & 0.095 \\
                      & 0.6 & 0.588\% & 0.970 & 3.043 & 2.837 & 4.828\%  & 0.922 & 2.474 & 2.574 & 0.917\% & 0.960 & 2.436 & 2.290 & -0.229\% & 0.944 & 0.107 & 0.105 \\ \hline
\end{tabular}
\end{adjustbox}
\end{sidewaystable}

\begin{sidewaystable}
\small
  \caption{Simulation Results under NP Scenarios. Under zero treatment effect scenarios, bias is reported instead of percentage bias.}
  \label{tab:obsnp-1}
  \begin{adjustbox}{scale=0.95,center}
\begin{tabular}{cccccccccccccccccc}
\hline
\multicolumn{2}{c}{\textbf{Scenario}} &
  \textbf{Bias\%} &
  \textbf{CP} &
  \textbf{SEM} &
  \textbf{SEE} &
  \textbf{Bias\%} &
  \textbf{CP} &
  \textbf{SEM} &
  \textbf{SEE} &
  \textbf{Bias\%} &
  \textbf{CP} &
  \textbf{SEM} &
  \textbf{SEE} &
  \textbf{Bias\%} &
  \textbf{CP} &
  \textbf{SEM} &
  \textbf{SEE} \\ \hline
\textbf{$\beta_A$} &
  \textbf{CR} &
  \multicolumn{4}{c}{Matched   RMST (Murray)} &
  \multicolumn{4}{c}{RMST Regression} &
  \multicolumn{4}{c}{Conner's IPTW RMST} &
  \multicolumn{4}{c}{IPTW   Cox (HR)} \\ \hline
\multirow{4}{*}{0}    & 0   & 0.032   & 0.964 & 2.795 & 2.611 & -0.020  & 0.932 & 1.322 & 1.402 & 0.076   & 0.962 & 2.284 & 2.127 & 0.005     & 0.938 & 0.052 & 0.052 \\
                      & 0.2 & 0.140   & 0.954 & 2.884 & 2.686 & 0.124   & 0.938 & 1.478 & 1.542 & 0.165   & 0.956 & 2.355 & 2.218 & 0.005     & 0.924 & 0.067 & 0.070 \\
                      & 0.4 & 0.252   & 0.958 & 3.071 & 2.908 & 0.333   & 0.908 & 1.906 & 2.033 & 0.258   & 0.956 & 2.505 & 2.420 & 0.008     & 0.936 & 0.074 & 0.077 \\
                      & 0.6 & 0.288   & 0.953 & 4.079 & 4.023 & 0.877   & 0.772 & 4.450 & 6.684 & 0.344   & 0.947 & 3.369 & 3.439 & 0.013     & 0.928 & 0.085 & 0.088 \\ \hline
\multirow{4}{*}{-0.4} & 0   & 0.196\% & 0.964 & 2.674 & 2.473 & 6.560\% & 0.888 & 1.244 & 1.215 & 0.428\% & 0.964 & 2.131 & 1.966 & 102.637\% & 0.000 & 0.056 & 0.061 \\
& 0.2 & 0.925\% & 0.962 & 2.746 & 2.507 & 6.798\% & 0.894 & 1.384 & 1.337 & 1.063\% & 0.964 & 2.193 & 2.019 & 298.860\% & 0.000 & 0.079 & 0.081 \\
                      & 0.4 & 1.728\% & 0.966 & 2.863 & 2.577 & 7.145\% & 0.898 & 1.653 & 1.636 & 1.781\% & 0.964 & 2.293 & 2.113 & 363.911\% & 0.000 & 0.091 & 0.095 \\
                      & 0.6 & 2.152\% & 0.972 & 3.119 & 2.897 & 6.678\% & 0.904 & 2.400 & 2.552 & 2.615\% & 0.952 & 2.516 & 2.376 & 422.119\% & 0.000 & 0.105 & 0.109 \\ \hline
\multirow{4}{*}{-0.8} & 0   & 0.158\% & 0.966 & 2.644 & 2.461 & 7.174\% & 0.820 & 1.253 & 1.212 & 0.345\% & 0.958 & 2.091 & 1.954 & 33.369\%  & 0.004 & 0.057 & 0.062 \\
                      & 0.2 & 0.608\% & 0.968 & 2.711 & 2.466 & 7.158\% & 0.864 & 1.392 & 1.349 & 0.747\% & 0.966 & 2.148 & 1.991 & 139.464\% & 0.000 & 0.083 & 0.086 \\
                      & 0.4 & 1.250\% & 0.968 & 2.816 & 2.538 & 7.139\% & 0.888 & 1.650 & 1.656 & 1.345\% & 0.962 & 2.239 & 2.079 & 172.568\% & 0.000 & 0.097 & 0.099 \\
                      & 0.6 & 1.713\% & 0.972 & 3.030 & 2.830 & 6.392\% & 0.906 & 2.302 & 2.383 & 1.941\% & 0.958 & 2.424 & 2.286 & 201.225\% & 0.000 & 0.111 & 0.116 \\ \hline
\multirow{4}{*}{-1.2} & 0   & 0.079\% & 0.966 & 2.605 & 2.434 & 7.806\% & 0.724 & 1.261 & 1.224 & 0.263\% & 0.960 & 2.042 & 1.925 & 10.201\%  & 0.446 & 0.059 & 0.063 \\
                      & 0.2 & 0.422\% & 0.962 & 2.669 & 2.440 & 7.695\% & 0.794 & 1.400 & 1.382 & 0.597\% & 0.962 & 2.095 & 1.966 & 86.905\%  & 0.000 & 0.088 & 0.090 \\
                      & 0.4 & 0.911\% & 0.970 & 2.764 & 2.511 & 7.386\% & 0.852 & 1.650 & 1.676 & 0.984\% & 0.950 & 2.177 & 2.022 & 109.380\% & 0.000 & 0.103 & 0.107 \\
                      & 0.6 & 1.353\% & 0.972 & 2.946 & 2.776 & 6.397\% & 0.904 & 2.230 & 2.284 & 1.438\% & 0.952 & 2.335 & 2.234 & 128.294\% & 0.000 & 0.118 & 0.123 \\ \hline
\multirow{4}{*}{-2}   & 0   & 0.053\% & 0.968 & 2.514 & 2.362 & 9.169\% & 0.512 & 1.279 & 1.238 & 0.204\% & 0.958 & 1.923 & 1.827 & -8.274\%  & 0.274 & 0.063 & 0.067 \\
                      & 0.2 & 0.221\% & 0.964 & 2.569 & 2.365 & 8.835\% & 0.622 & 1.419 & 1.419 & 0.372\% & 0.954 & 1.969 & 1.860 & 45.487\%  & 0.000 & 0.101 & 0.101 \\
                      & 0.4 & 0.533\% & 0.956 & 2.648 & 2.452 & 8.418\% & 0.730 & 1.653 & 1.689 & 0.607\% & 0.954 & 2.036 & 1.942 & 59.417\%  & 0.000 & 0.117 & 0.122 \\
                      & 0.6 & 0.782\% & 0.956 & 2.791 & 2.623 & 7.366\% & 0.866 & 2.149 & 2.098 & 0.831\% & 0.948 & 2.157 & 2.105 & 70.605\%  & 0.000 & 0.133 & 0.136 \\ \hline
\end{tabular}

\end{adjustbox}
\end{sidewaystable}

\begin{table}[!p]
\caption{Summary Statistics of Covariates by Baseline Smoking Status in ARIC Study}
\label{tab:c1}
\begin{tabular}{lll}
\hline
                         & Non-current smoker (10,717) & Current smoker (3,832) \\ \hline
Race, n (\%) of white    & 8161 (76.2\%)               & 2730 (71.2\%)          \\
Gender, n (\%) of female & 5957 (55.6\%)               & 2003 (52.3\%)          \\
Age, mean (SD)           & 54.4 (5.8)                  & 53.7 (5.7)             \\
BMI, mean (SD)           & 28.1 (5.4)                  & 26.3 (5.0)             \\
Diabetes, n (\%)         & 1044 (9.7\%)                & 333 (8.7\%)            \\
HDL (mmol/L), mean (SD)  & 52.6 (16.8)                 & 49.6 (17.3)            \\
LDL (mmol/L), mean (SD)  & 137.6 (38.9)                & 138.6 (40.4)           \\
Hypertension, n (\%)     & 3783 (35.3\%)               & 1225 (32.0\%)          \\ \hline
\end{tabular}
\end{table}

\begin{table}[!p]
\caption{ARIC Data Analysis Results}
\label{tab:senana-2}
\begin{tabular}{lcccc}
\hline
\multicolumn{1}{c}{} & Estimate & SE & 95\% CI Lower Bound & 95\% CI Upper Bound \\
\hline
Matched RMST        & -22.266 & 1.380 & -24.972 & -19.561 \\
IPTW Cox  (HR)      & 2.173   & 0.066 & 2.048   & 2.306\\
\hline
\end{tabular}
\end{table}

\clearpage
\pagenumbering{arabic}
\renewcommand*{\thepage}{Appendix-\arabic{page}}


\section*{Web Appendix A: Proofs of Theoretical Results in Section 2 of the Main Text}
\label{s:w_a}

\setcounter{proposition}{0}


We will prove the propositions and related lemmas in Section 2 of the main text.
\begin{proposition}
Given assumptions \ref{sutva}-\ref{strong_ignore}, we have $(T^0,T^1)\indep A|e(X)$, which further implies $(Z^0,Z^1)\indep A|e(X)$.
\end{proposition}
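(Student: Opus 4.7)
The plan is to adapt the classical Theorem 3 of \citet{cps} from a scalar potential outcome to the bivariate vector $(T^0, T^1)$, and then to transfer the conclusion from $(T^0, T^1)$ to $(Z^0, Z^1)$ via a simple measurability argument. Equivalently, I aim to show that the conditional distribution of $A$ given $(T^0, T^1, e(X))$ does not actually depend on $(T^0, T^1)$, which is the standard route to verifying conditional independence with a binary $A$.

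The first step is to compute $P(A=1 \mid e(X))$. By the tower property together with the definition $e(X) = P(A=1 \mid X)$, I get $P(A=1 \mid e(X)) = E[P(A=1 \mid X) \mid e(X)] = E[e(X) \mid e(X)] = e(X)$. The second step is to compute $P(A=1 \mid T^0, T^1, e(X))$ and show it likewise equals $e(X)$. Writing this expression as $E[P(A=1 \mid T^0, T^1, X) \mid T^0, T^1, e(X)]$ by iterated expectations, I will invoke Assumption \ref{strong_ignore} (strong ignorability) to replace $P(A=1 \mid T^0, T^1, X)$ with $P(A=1 \mid X) = e(X)$; since $e(X)$ is measurable with respect to the conditioning $\sigma$-field, the outer conditional expectation returns $e(X)$. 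Equating the two displays gives $P(A=1 \mid T^0, T^1, e(X)) = P(A=1 \mid e(X))$, which for a binary $A$ is exactly the statement $(T^0, T^1) \indep A \mid e(X)$.

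For the second claim, $Z^A = \min(T^A, \tau)$ with $\tau$ a fixed constant, so the pair $(Z^0, Z^1)$ is a deterministic measurable function of $(T^0, T^1)$. Conditional independence is preserved under measurable transformations of the variable separated from $A$, hence $(T^0, T^1) \indep A \mid e(X)$ immediately implies $(Z^0, Z^1) \indep A \mid e(X)$. Assumption \ref{sutva} (SUTVA) is used implicitly to guarantee that each subject has well-defined single-valued potential outcomes $T^0, T^1$ (and hence $Z^0, Z^1$), so Assumption \ref{strong_ignore} can be applied without ambiguity.

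The main obstacle is essentially nonexistent: the result is a textbook adaptation of the balancing-score argument. The only point deserving care is the second iterated-expectation step, where I must verify that $e(X)$ is measurable with respect to $\sigma(T^0, T^1, e(X))$ in order to pull it out of the inner conditional expectation, but this holds by construction of the conditioning variables.
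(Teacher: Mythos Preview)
Your proposal is correct and follows essentially the same route as the paper: both compute $P(A=1\mid e(X))=e(X)$ via the tower property, then show $P(A=1\mid T^0,T^1,e(X))=e(X)$ by iterating expectations through $X$ and invoking strong ignorability, and finally pass to $(Z^0,Z^1)$ by noting it is a fixed measurable function of $(T^0,T^1)$. The only cosmetic difference is that you make the measurability justification for pulling $e(X)$ out of the conditional expectation explicit, whereas the paper leaves it tacit.
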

\begin{proof} 
It is equivalent to show 
$P\{A=1|T^1,T^0,e(X)\}=P\{A=1|e(X)\}$.
By Theorem 2 in \citet{cps}, we have $P(A=1|e(X))=E\{e(X)|e(X)\}=e(X)$, then it is equivalent to show 
$P\{A=1|T^1,T^0,e(X)\}=e(X)$. We have
\begin{align*}
P\{A=1|T^1,T^0,e(X)\}&=E\{P(A=1|T^1,T^0,X)|T^1,T^0,e(X)\}\\
&=E\{P(A=1|X)|T^1,T^0,e(X)\}\text{(by strongly ignorability)}\\
&=E\{e(X)|T^1,T^0,e(X)\}=e(X)=P\{A=1|e(X)\}.
\end{align*}
Thus, we have $(T^0,T^1)\indep A|e(X)$ for $0<\text{pr}(A=1|e(X))<1$. Since $Z^A=\min (T^A, \tau)$ and $\tau$ is a fixed constant, the above conditional independence also implies
$(Z^0,Z^1)\indep A|e(X)$. 
\end{proof}
\begin{lemma}
\label{marginal_ignorable}
Given assumptions \ref{sutva}-\ref{indep_censor}, $(Y^1, Y^0)\indep A$ holds marginally in the matched sample under the propensity score matching design.
\end{lemma}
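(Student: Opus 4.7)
The plan is to first upgrade the conditional independence inherited from Proposition~\ref{note_prop} to an analogous statement for the observed restricted outcomes $Y^A = \min(Z^A, C)$, and then exploit the propensity-score balance produced by matching to integrate out $e(X)$ and obtain the marginal statement.

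Step~1: I would first argue that $(Y^0, Y^1) \indep A \mid e(X)$. Proposition~\ref{note_prop} already gives $(Z^0, Z^1) \indep A \mid e(X)$. Since $Y^A = \min(Z^A, C)$, it suffices to show that adjoining the censoring variable $C$ does not destroy the conditional independence. Assumption~\ref{indep_censor} states $C \indep Z^A \mid (A, X)$ for each $A$; combined with Assumption~\ref{strong_ignore} this delivers $(Z^0, Z^1, C) \indep A \mid X$, and by coarsening to the balancing score $e(X)$ (Theorem~3 of \citet{cps}) we obtain $(Z^0, Z^1, C) \indep A \mid e(X)$. Since $(Y^0, Y^1)$ is a deterministic function of $(Z^0, Z^1, C)$, the desired conditional independence $(Y^0, Y^1) \indep A \mid e(X)$ follows.

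Step~2: I would then invoke the key design property of propensity score matching: in the 1-1 matched sample, each treated unit is paired with a control unit having (essentially) the same $e(X)$, so the conditional distribution of $e(X)$ given $A = 1$ equals that given $A = 0$ in the matched sample. Denoting this common distribution by $F^{m}_{e(X)}$, we have for any measurable set $B$,
\begin{align*}
P\bigl((Y^0, Y^1) \in B \mid A = a,\, \text{matched}\bigr)
&= \int P\bigl((Y^0, Y^1) \in B \mid e(X) = e,\, A = a,\, \text{matched}\bigr) \, dF^{m}_{e(X) \mid A = a}(e) \\
&= \int P\bigl((Y^0, Y^1) \in B \mid e(X) = e\bigr) \, dF^{m}_{e(X)}(e),
\end{align*}
where the first equality uses the law of total probability and the second uses Step~1 (so the integrand does not depend on $a$) together with the balance of $e(X)$ in the matched sample (so the integrating measure does not depend on $a$). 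Since the right-hand side is free of $a$, the marginal independence $(Y^1, Y^0) \indep A$ holds in the matched sample.

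The main obstacle I anticipate is Step~2: strictly speaking, exact balance on $e(X)$ only holds under exact matching, whereas the optimal matching algorithm used in the paper produces only near-exact balance. I would therefore state the result under the idealization of exact matching on $e(X)$, as is standard in the matching literature, and note that with approximate matching the conclusion holds up to the quality of the balance that is verified empirically in step~(2) of the estimation procedure. A secondary technical point is confirming that Assumption~\ref{indep_censor}, which is stated componentwise in $A$, can be strengthened to the joint statement $(Z^0, Z^1, C) \indep A \mid X$ needed in Step~1; this is standard under the implicit assumption that $C$ represents a single (treatment-free) administrative censoring mechanism, and I would make this explicit in the proof.
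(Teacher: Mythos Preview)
Your proposal is correct and follows essentially the same route as the paper: establish $(Y^0,Y^1)\indep A\mid e(X)$, then use the propensity-score balance created by 1-1 matching (the paper phrases this as $P(A=1\mid e(X))=\tfrac{1}{k+1}$ constant, hence $e(X)\indep A$ in the matched sample) to integrate out $e(X)$ and obtain the marginal statement. Your Step~1 is actually more careful than the paper's, which simply asserts $(Y^1,Y^0)\indep A\mid e(X)$ ``by Assumption~\ref{strong_ignore}'' without unpacking the role of $C$; your explicit coarsening argument via $(Z^0,Z^1,C)\indep A\mid X$ and the caveats you flag (exact versus approximate matching, and the need for $C$ to be treatment-free) are appropriate refinements rather than departures.
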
%
\begin{proof}
By assumption~\ref{strong_ignore}, we have $(Y^1, Y^0)\indep A|e(X)$ where $0<P(A=1|e(X))<1$. Let $M$ denotes the matching structure, and $\epsilon_M$ denotes the set of propensity scores in the matched sample.
Then, we have the following equation by matching on propensity score $e(X)$ with a constant treatment to control allocation ratio $1: k$ ($k=1$ for pair matching),
 $$P(A=1|e(X))=\frac{1}{k+1},\ \text{for all $e(X)\in\epsilon_M$}.$$
Thus, $e(X)\indep A$ holds in the matched sample, i.e. $f_M(e(X)|A)=f_M(e(X))$. 
Consider the joint density of $Y^1$ and $Y^0$ conditional on $A$ in the matched sample, which is denoted as $f_M(Y^1, Y^0|A)$, we have
\begin{align*}
&f_M(Y^1, Y^0|A)=\int_{\epsilon_M}f(Y^1, Y^0|A, e(X))f_M(e(X)|A)de(X)\\
=&\int_{\epsilon_M}f(Y^1, Y^0|A, e(X))f_M(e(X))de(X)\ \text{[matched by constant allocation ratio]}\\
=&\int_{\epsilon_M}f(Y^1,Y^0|e(X))f_M(e(X))de(X)\ \text{[by assumption~\ref{strong_ignore}]}\\
=&f_M(Y^1, Y^0).
\end{align*}
Since $f_M(Y^1, Y^0|A)=f_M(Y^1, Y^0)$ implies $(Y^1, Y^0)\indep A$ in the matched sample, $(Y^1, Y^0)\indep A$ holds marginally in the matched sample.
\end{proof}
\begin{lemma}
\label{rmst_lemma}
Let $\hat{S}_{e(X),A}(t)$ denotes the KM survival function estimator given propensity score $e(X)$ and treatment indicator $A$.
For a fixed truncation time $\tau$, 
$$\underset{n\rightarrow\infty}{\lim}\int^\tau_0E_T[\hat{S}_{e(X),A}(t)-S_{e(X),A}(t)]dt=0,$$
\end{lemma}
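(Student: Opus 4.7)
The plan is to swap the limit and the integral via the Dominated Convergence Theorem (DCT), then invoke the classical asymptotic unbiasedness of the Kaplan--Meier (KM) estimator on each fixed time slice. Assumption \ref{tau_assump} ($\tau<t_{\max}$) is the key regularity condition: it guarantees that $S_{e(X),A}(t)$ stays bounded away from zero on $[0,\tau]$ and that the KM estimator $\hat{S}_{e(X),A}(t)$ is well-defined on the whole truncated interval with probability tending to one. Assumption \ref{indep_censor} provides conditional independent censoring so that the standard KM theory applies within each stratum defined by the propensity score value $e(X)$ and treatment arm $A$.

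First, I would establish pointwise asymptotic unbiasedness. For each fixed $t\in[0,\tau]$, the (stratum-specific) KM estimator is asymptotically unbiased:
$$\lim_{n\to\infty} E_T\!\left[\hat{S}_{e(X),A}(t)\right] = S_{e(X),A}(t),$$
hence $E_T[\hat{S}_{e(X),A}(t)-S_{e(X),A}(t)]\to 0$ for every $t\in[0,\tau]$. This is a textbook consequence of either the martingale representation of the KM process or the Greenwood-type bias expansion (see, e.g., Fleming and Harrington, 2011, which the paper already cites), under the conditions above.

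Second, I would justify the interchange of limit and integral. Since both $\hat{S}_{e(X),A}(t)$ and $S_{e(X),A}(t)$ lie in $[0,1]$, the integrand satisfies
$$\left|E_T\!\left[\hat{S}_{e(X),A}(t)-S_{e(X),A}(t)\right]\right| \;\leq\; E_T\!\left[\hat{S}_{e(X),A}(t)\right] + S_{e(X),A}(t) \;\leq\; 2,$$
so the constant function $g(t)=2$ is an integrable dominating function on the bounded interval $[0,\tau]$. By the DCT,
$$\lim_{n\to\infty}\int_0^\tau E_T\!\left[\hat{S}_{e(X),A}(t)-S_{e(X),A}(t)\right] dt \;=\; \int_0^\tau \lim_{n\to\infty} E_T\!\left[\hat{S}_{e(X),A}(t)-S_{e(X),A}(t)\right] dt \;=\; 0,$$
which is the desired conclusion.

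The only conceptually nontrivial step is the pointwise asymptotic unbiasedness of the stratum-specific KM estimator; once that is accepted as a standard fact, the dominated convergence step is essentially automatic because $[0,\tau]$ is bounded and survival functions are $[0,1]$-valued. The main obstacle to writing this out fully rigorously would be addressing the fact that propensity scores are typically continuous, so ``conditional on $e(X)$'' should be interpreted either as conditioning on the matched-pair structure (as in Lemma \ref{marginal_ignorable}) or via a standard stratification/limiting argument; in either interpretation, within the relevant subsample the classical KM asymptotics apply, and the proof above goes through essentially unchanged.
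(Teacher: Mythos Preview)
Your argument is correct, but it follows a different route from the paper. The paper does not use pointwise convergence plus dominated convergence; instead it invokes the explicit bias bound from Lemma~3.2.1 of \citet{fleming2011counting}, namely $0\le E_T[\hat S_{e(X),A}(t)-S_{e(X),A}(t)]\le [1-S_{e(X),A}(t)][1-\pi_{e(X),A}(t)]^n$ with $\pi_{e(X),A}(t)=P(\tilde T\ge t)$, and then bounds the whole integral uniformly by $\tau[1-\pi_{e(X),A}(\tau)]^n\to 0$. That approach is shorter and yields an explicit geometric rate, which the paper reuses verbatim in the proof of Proposition~\ref{RMST_unbias}. Your DCT argument is perfectly valid and arguably more conceptual, but it gives no rate, and the ``pointwise asymptotic unbiasedness'' step you cite as a textbook fact is, in Fleming and Harrington, essentially the same Lemma~3.2.1 bound---so you are implicitly using the same ingredient the paper uses, just packaged differently.
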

\begin{proof}
Define $\tilde{T}=\min (T,C)$ and $\pi_{e(X),A}(t)=P(\tilde{T}\geq t)\in (0,1)$, then $[1-\pi_{e(X),A}(t)]^n$ is a nonnegative function that increases as $t$ increases. By Lemma 3.2.1 in \citet{fleming2011counting}, we know:
\begin{align*}
&\int^\tau_0E_T[\hat{S}_{e(X),A}(t)-S_{e(X),A}(t)]dt
\leq\int^\tau_0[1-S_{e(X),A}(t)][1-\pi_{e(X),A}(t)]^ndt\\
&\leq\int^\tau_0[1-\pi_{e(X),A}(t)]^ndt \leq\tau[1-\pi_{e(X),A}(\tau)]^n.
\end{align*}
Since $\tau>0$ is a fixed constant and $1-\pi_{e(X),A}(\tau)\in(0,1)$, we have
$$\underset{n\rightarrow\infty}{\lim}\int^\tau_0E_T[\hat{S}_{e(X),A}(t)-S_{e(X),A}(t)]dt\leq\underset{n\rightarrow\infty}{\lim}\tau[1-\pi_{e(X),A}(\tau)]^n=0.$$
Therefore, we have $\underset{n\rightarrow\infty}{\lim}\int^\tau_0E_T[\hat{S}_{e(X),A}(t)-S_{e(X),A}(t)]dt=0$. 
\end{proof}
\begin{proposition}
Given assumptions \ref{sutva}-\ref{tau_assump}, the RMST estimator based on KM method given propensity score $e(X)$ and treatment group $A$, denoted as $\hat{\mu}_{e(X),A}$, is an asymptotically unbiased estimator for $\mu_{e(X),A}$ given $\tau<t_{\max}$.
\end{proposition}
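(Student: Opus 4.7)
The plan is to reduce the statement to the preceding lemma by a Fubini/Tonelli exchange, so almost all of the analytic work is already done. First I would write out the target of estimation and the estimator in integral form:
\[
\mu_{e(X),A}=\int_0^{\tau} S_{e(X),A}(t)\,dt,\qquad \hat{\mu}_{e(X),A}=\int_0^{\tau}\hat{S}_{e(X),A}(t)\,dt,
\]
where Assumption~\ref{tau_assump} ($\tau<t_{\max}$) guarantees both integrals are well defined and that the KM estimator $\hat{S}_{e(X),A}$ is computed only on a range where the at-risk process is positive almost surely for large $n$. Assumption~\ref{indep_censor} is what justifies invoking the KM estimator against the conditional survival function $S_{e(X),A}$, since censoring is conditionally independent of the event time given $(A,X)$ and hence given $(A,e(X))$.

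Next I would form the bias and swap the order of integration. Because $\hat{S}_{e(X),A}(t)$ and $S_{e(X),A}(t)$ both take values in $[0,1]$ and the outer integration is over the bounded interval $[0,\tau]$, Fubini's theorem applies, giving
\[
E\bigl[\hat{\mu}_{e(X),A}\bigr]-\mu_{e(X),A}
=\int_0^{\tau}E_T\bigl[\hat{S}_{e(X),A}(t)-S_{e(X),A}(t)\bigr]\,dt.
\]
Lemma~\ref{rmst_lemma} states precisely that this integral tends to zero as $n\to\infty$, so
\[
\lim_{n\to\infty}\bigl\{E[\hat{\mu}_{e(X),A}]-\mu_{e(X),A}\bigr\}=0,
\]
which is the asymptotic unbiasedness claim.

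The only non-routine point is confirming that the hypotheses of Lemma~\ref{rmst_lemma} are met within the stratum defined by $e(X)$: namely that $\pi_{e(X),A}(\tau)=P(\tilde{T}\ge \tau\mid e(X),A)$ is strictly positive, which follows from Assumption~\ref{tau_assump} together with the positivity part of Assumption~\ref{strong_ignore}. Beyond that, the argument is a one-line application of Fubini followed by a direct appeal to the lemma, so I do not expect any real obstacle; the substantive analytic content (bounding the bias of the KM survival estimator by $\tau[1-\pi_{e(X),A}(\tau)]^n$) is already absorbed into Lemma~\ref{rmst_lemma}.
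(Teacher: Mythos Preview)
Your proposal is correct and follows essentially the same route as the paper: write the bias as $\int_0^\tau E_T[\hat S_{e(X),A}(t)-S_{e(X),A}(t)]\,dt$ via Fubini (justified by boundedness on $[0,\tau]$) and then invoke Lemma~\ref{rmst_lemma}. The paper additionally spells out the pointwise asymptotic unbiasedness of the KM estimator from Fleming--Harrington before passing to the integral, but that step is not needed once Lemma~\ref{rmst_lemma} is in hand, so your streamlined version is fine.
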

\begin{proof}%
First, we will show that $\hat{S}_{e(X),A}(t)$ is asymptotically unbiased for any time $T< t_{\max}$. Let $t_i$'s be i.i.d event times ranking from small to large, and $Y_i$ is the number of people at risk at event time $t_i$. Let $d_i$ be the number of event at event time $t_i$, then we have the definition below.
\begin{equation*}
\hat{S}_{e(X),A}(t)
=\left\{\begin{array}{l}
1, \text{if}\ t\leq t_1\ \text{given}\ e(X)\text{ and } A\\
\underset{t_i\leq t}{\prod}\frac{Y_i-d_i}{Y_i}, \text{if}\ t_1\leq t \ \text{given}\ e(X)\text{ and }A\\
\end{array}\right.
\end{equation*}
Let $\hat{\Lambda}_{e(X),A}(u)=\underset{t_i<t}{\sum}\frac{d_i}{Y_i}$ be the Nelson-Aalen estimator for the cumulative hazard function $\Lambda_{e(X),A}(u)$ given $e(X)$ and $A$.
According to Theorem 3.2.3 in \citet{fleming2011counting}, we have the following equation if $S_{e(X),A}(t)>0$:
\begin{align*}
    &\frac{\hat{S}_{e(X),A}(t)}{S_{e(X),A}(t)}-1=-\int^t_0\frac{\hat{S}_{e(X),A}(u^-)}{S_{e(X),A}(u)}d\{\hat{\Lambda}_{e(X),A}(u)-\Lambda_{e(X),A}(u)\},\\
    &E[\hat{S}_{e(X),A}(t)-S_{e(X),A}(t)]=E[I_{\{T<t\}}\frac{\hat{S}_{e(X),A}(T)\{S_{e(X),A}(T)-S_{e(X),A}(t)\}}{S_{e(X),A}(T)}].
\end{align*}

Based on Lemma 3.2.1 in \citet{fleming2011counting}, the bias $E[\hat{S}_{e(X),A}(t)-S_{e(X),A}(t)]$ will converge to zero as sample size $n\rightarrow \infty$. Thus, $\hat{S}_{e(X),A}(t)$ is asymptotically unbiased given $t<t_{\max}$. Similarly, $\hat{S}_{e(X),A=0}(t)$ is also asymptotically unbiased given $t<t_{\max}$.

Second, we will show $\hat{\mu}_{e(X),A}$ is an asymptotically unbiased estimator given $\tau<t_{\max}$.
Since $E_{T}(\hat{\mu}_{e(X),A})=E_{T}[\int^\tau_0\hat{S}_{e(X),A}(t)dt]$ and 
$\hat{S}_{e(X),A}(t)$ is a positive bounded function between 0 and 1 when $t\in[0,\tau]$, then we have
$$E_{T}[\int^\tau_0|\hat{S}_{e(X),A}(t)|dt]=E_{T}[\int^\tau_0\hat{S}_{e(X),A}(t)dt]\leq\tau<\infty.$$
By Fubini's Theorem, 
$$E_{T}[\int^\tau_0\hat{S}_{e(X),A}(t)dt]=\int^\tau_0E_{T}[\hat{S}_{e(X),A}(t)]dt.$$

By propositions~\ref{RMST_unbias} and lemma~\ref{rmst_lemma}, we have the following for fixed truncated time $\tau$.
\begin{align*}
&\underset{n\rightarrow\infty}{\lim}E_{T}(\hat{\mu}_{e(X),A})-\mu_{e(X),A}\\
=&\underset{n\rightarrow\infty}{\lim}E_{T}[\int^\tau_0\hat{S}_{e(X),A}(t)dt]-\int^\tau_0S_{e(X),A}(t)dt\\
=&\underset{n\rightarrow\infty}{\lim}\int^\tau_0E_{T}[\hat{S}_{e(X),A}(t)]dt-\int^\tau_0S_{e(X),A}(t)dt\ \text{(by Fubini's Theorem)}\\
=&\underset{n\rightarrow\infty}{\lim}\int^\tau_0E_{T}\{\hat{S}_{e(X),A}(t)-S_{e(X),A}(t)\}dt
\leq\underset{n\rightarrow\infty}{\lim}\tau[1-\pi_{e(X),A}(\tau)]^n=0.
\end{align*}
Therefore, $\hat{\mu}_{e(X),A}$ is an asymptotically unbiased estimator for $\mu_{e(X),A}$ when $\tau<t_{\max}$.
\end{proof}
\begin{lemma}
\label{prop:asyunbias_lemma}
For a fixed truncation time point $\tau<t_{\max}$, 
$$\underset{n\rightarrow\infty}{\lim}\int^\tau_0E_T[\hat{S}^AA(t)-S^A(t)]dt=0,$$
\end{lemma}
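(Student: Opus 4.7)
The approach is to reduce this potential-outcome claim to the already-proved conditional version in Lemma~\ref{rmst_lemma}, exploiting the marginal ignorability that the matched design provides. First I would invoke Lemma~\ref{marginal_ignorable} to observe that within the matched sample $(Y^1,Y^0)\indep A$, so that $S^A(t)=P(T^A>t)=P(T>t\mid A=a)$ holds in the matched cohort, and consequently $\hat{S}^A(t)$ coincides with the ordinary Kaplan--Meier estimator restricted to treatment arm $A$. This identifies the potential-outcome survival curve with an observable group-specific curve and removes all counterfactuals from the problem.

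Next I would mimic the calculation from the proof of Proposition~\ref{RMST_unbias}, with $e(X)$ dropped from the conditioning. Concretely: (i) apply Theorem 3.2.3 of Fleming--Harrington to represent $\hat{S}^A(t)/S^A(t)-1$ as a stochastic integral against $\hat\Lambda^A-\Lambda^A$; (ii) take $E_T$ on both sides to obtain
$$E_T[\hat{S}^A(t)-S^A(t)]=E\bigl[I_{\{T<t\}}\,\hat{S}^A(T)\{S^A(T)-S^A(t)\}/S^A(T)\bigr];$$
(iii) bound this by $[1-\pi^A(t)]^n$ using Lemma 3.2.1 of Fleming--Harrington, where $\pi^A(t)=P(\tilde T\geq t\mid A=a)$; (iv) integrate over $t\in[0,\tau]$ via Fubini (the integrand is dominated by $1$ on a finite interval) to arrive at the uniform bound $\tau[1-\pi^A(\tau)]^n$. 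Assumption~\ref{tau_assump} then guarantees $\pi^A(\tau)>0$, so this bound vanishes as $n\to\infty$, which is exactly the desired conclusion.

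The main obstacle will be the identification step at the very beginning: one must check that the induced censoring mechanism in the matched cohort still satisfies the conditions under which Kaplan--Meier targets $S^A$. Lemma~\ref{marginal_ignorable} gives marginal independence of $(Y^1,Y^0)$ and $A$, but the Fleming--Harrington machinery also needs $C\indep T^A$ at the marginal level in the matched cohort; this follows from Assumption~\ref{indep_censor} after averaging out $e(X)$ and using the constant $1{:}1$ allocation ratio that makes $e(X)\indep A$ in the matched cohort, which was established in the proof of Lemma~\ref{marginal_ignorable}. Once that identification is in hand, the transfer of the Fleming--Harrington calculation from the conditional to the marginal level is routine, and indeed is essentially just Lemma~\ref{rmst_lemma} with the conditioning on $e(X)$ integrated out against its marginal distribution in the matched sample.
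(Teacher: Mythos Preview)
Your proposal is correct and follows essentially the same route as the paper: bound the integrated Kaplan--Meier bias by $\tau[1-\pi_A(\tau)]^n$ via Lemma~3.2.1 of Fleming--Harrington and let $n\to\infty$. The paper's proof is terser---it omits your identification step via Lemma~\ref{marginal_ignorable}, the Theorem~3.2.3 representation, and the censoring discussion, applying the Lemma~3.2.1 bound directly---but the core analytical argument is identical.
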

\begin{proof}
Let $\tilde{T}=\min (T,C)$ and $\pi(t)=P(\tilde{T}\geq t)\in (0,1)$, then $[1-\pi(t)]^n$ is a nonnegative function which increases as $t$ increases. Let $\pi_A(t)$ denotes the function $\pi(t)$ for treatment indicator A. By Lemma 3.2.1 in \citet{fleming2011counting}, we have
$$\int^\tau_0E_T[\hat{S}^A(t)-S^A(t)]dt\leq \int^\tau_0 [1-S^A(t)][1-\pi_A(t)]^ndt\leq\int^\tau_0[1-\pi_A(t)]^ndt\leq \tau[1-\pi_A(t)]^n.$$
Since $\tau>0$ is a fixed constant and $1-\pi_A(\tau)\in(0,1)$, we have
$$\underset{n\rightarrow\infty}{\lim}\int^\tau_0E_T[\hat{S}^A(t)-S^A(t)]dt\leq\underset{n\rightarrow\infty}{\lim}\tau[1-\pi_A(t)]^n=0.$$
\end{proof}
\begin{proposition}
Given assumptions \ref{sutva}-\ref{tau_assump},  $\hat{\Delta}_{ATT}=\int^\tau_0[\hat{S}^1(t)-\hat{S}^0(t)]dt$ is asymptotically unbiased.
\end{proposition}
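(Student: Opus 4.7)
My plan is to mirror the three-part argument used in Proposition \ref{RMST_unbias} but applied at the marginal matched-sample level rather than conditionally on $e(X)$. First I would invoke Lemma \ref{marginal_ignorable} to replace the propensity-score-conditional ignorability with the marginal independence $(Y^1, Y^0) \indep A$ that holds in the matched sample. Combined with Assumption \ref{indep_censor}, this lets me treat the within-arm observations as if they had been produced by a randomized design: the KM estimator $\hat{S}^1(t)$ computed from the treated arm is estimating the marginal survival function $S^1(t)$ of the matched (i.e.\ treated) population, and analogously $\hat{S}^0(t)$ targets $S^0(t)$. This reduction is the main conceptual step and converts the matching problem into one where the already-proved KM bias bounds apply arm-by-arm.

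Next I would apply Fubini's theorem, whose hypotheses are easy to verify since $\hat{S}^A(t) \in [0,1]$ on $[0,\tau]$ implies $\int_0^\tau |\hat{S}^A(t)| \, dt \le \tau < \infty$. This yields
\begin{equation*}
E[\hat{\Delta}_{ATT}] = \int_0^\tau E[\hat{S}^1(t)] \, dt - \int_0^\tau E[\hat{S}^0(t)] \, dt.
\end{equation*}
Finally, I would apply Lemma \ref{prop:asyunbias_lemma} to each term. That lemma guarantees, under Assumption \ref{tau_assump} through $\pi_A(\tau) \in (0,1)$, that $\int_0^\tau E[\hat{S}^A(t) - S^A(t)] \, dt \to 0$ as $n \to \infty$ for each $A \in \{0, 1\}$. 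Combining with the display above gives
\begin{equation*}
\lim_{n \to \infty} E[\hat{\Delta}_{ATT}] = \int_0^\tau [S^1(t) - S^0(t)] \, dt = \Delta_{ATT},
\end{equation*}
which is exactly asymptotic unbiasedness.

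The main obstacle I expect is the first step: arguing rigorously that running the KM estimator separately on the treated and control arms of the matched sample indeed recovers the marginal $S^1$ and $S^0$ of the matched population, rather than some propensity-score-stratum-weighted quantity that could still depend on $A$. Lemma \ref{marginal_ignorable} handles the treatment-assignment side, but one must also verify that the conditional independence of censoring from Assumption \ref{indep_censor} transfers to the matched sample unconditionally, and that the within-pair correlation introduced by matching on $e(X)$ affects only the variance of $\hat{\Delta}_{ATT}$, not its expectation. Once this bridge is carefully laid down, Steps 2 and 3 reduce to the Fubini plus bounded-bias machinery already developed for Proposition \ref{RMST_unbias} and Lemma \ref{prop:asyunbias_lemma}, so no new technical ingredients are needed.
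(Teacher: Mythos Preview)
Your proposal is correct and reaches the same conclusion via the same two technical ingredients (Fubini's theorem and Lemma~\ref{prop:asyunbias_lemma}), but the route differs from the paper's in one respect worth noting. The paper does \emph{not} invoke Lemma~\ref{marginal_ignorable} here; instead it starts from the propensity-score-conditional KM estimators $\hat{S}_{e(X),A}(t)$, applies Fubini three times to interchange $E_{e(X)}$, $E_T$, and $\int_0^\tau$, and uses the identification $E_{e(X)}[\hat{S}_{e(X),A}(t)]=\hat{S}^A(t)$ to collapse the stratum-level objects into the marginal ones before applying Lemma~\ref{prop:asyunbias_lemma}. Your approach front-loads the marginal-ignorability argument via Lemma~\ref{marginal_ignorable} and then works entirely at the marginal level, which is conceptually cleaner and avoids the somewhat informal step of equating the $e(X)$-average of stratum-specific KM curves with the pooled KM curve. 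The price you pay is exactly the obstacle you flag: you must argue that the conditional censoring independence in Assumption~\ref{indep_censor} survives marginalization in the matched sample, whereas the paper's conditional-then-average route sidesteps that question by staying inside strata where Assumption~\ref{indep_censor} applies directly. Both arguments are short once the bridge is in place, and neither requires machinery beyond what is already in the appendix.
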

\begin{proof}
Since the estimated survival function $\hat{S}(t)\in(0,1)$ and $|\int^\tau_0\hat{S}(t)dt|\in (0,\tau)$, we also satisfy the following conditions to use Fubini's theorem:
\begin{enumerate}
\item $E_{e(X)}\{E_T|\int^\tau_0\hat{S}_{e(X),A}(t)dt|\}\leq E_{e(X)}\{E_T(\tau)\}=\tau<\infty$ 
\item $E_{e(X)}[\int^\tau_0|\hat{S}_{e(X),A}(t)|dt]\leq \tau<\infty$ 
\item $E_T\{\int^\tau_0\hat{S}^A(t)dt\}\leq\tau<\infty$ 
\end{enumerate}%
Thus, we can apply Fubini's theorem three time to interchange the expectation of $e(X)$ as below: 
\begin{align*}
&E_{e(X)}\{E_T[\int^\tau_0\hat{S}_{e(X),A=1}(t)dt]-E_T[\int^\tau_0\hat{S}_{e(X),A=0}(t)dt]\}\\ 
=&E_T\{E_{e(X)}[\int^\tau_0\hat{S}_{e(X),A=1}(t)dt]-E_{e(X)}[\int^\tau_0\hat{S}_{e(X),A=0}(t)dt]\}\\
=&E_T\{\int^\tau_0E_{e(X)}[\hat{S}_{e(X),A=1}(t)]dt-\int^\tau_0E_{e(X)}[\hat{S}_{e(X),A=0}(t)]dt\}\\
=&E_T\{\int^\tau_0\hat{S}^1(t)dt-\int^\tau_0\hat{S}^0(t)dt\}\\
=&\int^\tau_0E_T[\hat{S}^1(t)]dt-\int^\tau_0E_T[\hat{S}^0(t)]dt.
\end{align*}
By lemma~\ref{prop:asyunbias_lemma}, we have
\begin{align*}
&\underset{n\rightarrow\infty}{\lim}E_{e(X)}\{E_T[\int^\tau_0\hat{S}_{e(X),A=1}(t)dt]-E_T[\int^\tau_0\hat{S}_{e(X),A=0}(t)dt]\}\\ 
=&\underset{n\rightarrow\infty}{\lim}\int^\tau_0E_T[\hat{S}^1(t)]dt-\underset{n\rightarrow\infty}{\lim}\int^\tau_0E_T[\hat{S}^0(t)]dt\\
=&\int^\tau_0E_T[S^1(t)]dt-\int^\tau_0E_T[S^0(t)]dt
=\mu_1-\mu_0.
\end{align*}
Therefore, our proposed propensity score matched RMST estimator is asymptotically unbiased when truncation time point $\tau<t_{\max}$.
\end{proof}

\newpage
\section*{Web Appendix B: Proofs of Theoretical Results for Section 4 of the Main Text}
\label{s:w_b}
\subsection*{Web Appendix B.1: Proofs of Propositions about Conditional Effect}

We define the expectations of the RMST outcome $Z=\min (T,\tau)$. The following propositions are proved \textbf{within each propensity score value} $e(X)=e(x)$. 
\begin{proposition}
\label{sa_binary}
For binary unmeasured confounder $U=0,1$, we have $RR_{AU|e(X)=e(x)}\geq 1$ and $MR_{UZ|e(X)=e(x)}\geq 1$.
\end{proposition}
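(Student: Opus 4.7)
The plan is to handle the two inequalities separately, since each reduces to a short combinatorial observation.

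For $RR_{AU|e(X)=e(x)} \geq 1$, I would first introduce the shorthand $p_k = P(U=k \mid A=1, e(X)=e(x))$ and $q_k = P(U=k \mid A=0, e(X)=e(x))$ for $k=0,1$, so that $RR_{AU,k|e(x)} = p_k/q_k$ and, crucially, $p_0 + p_1 = q_0 + q_1 = 1$. I would then argue by contradiction: if both ratios were strictly less than $1$, we would have $p_0 < q_0$ and $p_1 < q_1$, whose sum gives $1 < 1$, a contradiction. Hence at least one of $p_0/q_0$ or $p_1/q_1$ must be at least $1$, and so $RR_{AU|e(x)} = \max_k(p_k/q_k) \geq 1$. (Implicitly this uses the positivity condition $0 < e(x) < 1$ from Assumption~\ref{strong_ignore} and $0 < q_k$ so the ratios are well-defined; these will be noted.)

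For $MR_{UZ|e(X)=e(x)} \geq 1$, the argument is even shorter. Since $Z = \min(T,\tau) \geq 0$, we have $r_a(u) \geq 0$ for every $a \in \{0,1\}$ and $u \in \{0,1\}$; under the mild regularity assumption that these conditional means are strictly positive (so that the ratio is defined), the inequality $\max_u r_a(u) \geq \min_u r_a(u) > 0$ is immediate, giving $MR_{UZ|A=a,e(x)} \geq 1$ for each $a$. Taking the maximum over $a \in \{0,1\}$ preserves the inequality and yields $MR_{UZ|e(x)} \geq 1$.

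There is no real obstacle here. The only point that needs care is flagging the positivity conditions ($q_k > 0$ and $\min_u r_a(u) > 0$) that make the ratios sensible; these follow from Assumption~\ref{strong_ignore} together with a standard non-degeneracy assumption on the outcome distribution. The overall proof should be just a few lines in the final write-up.
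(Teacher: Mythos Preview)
Your proposal is correct and mirrors the paper's own proof essentially line for line: the paper handles $MR_{UZ|e(X)=e(x)}\geq 1$ by appealing directly to the $\max/\min$ definition, and proves $RR_{AU|e(X)=e(x)}\geq 1$ by the same contradiction argument (assuming both ratios are strictly below $1$ and summing the resulting strict inequalities to obtain $1<1$). Your added remarks about the positivity conditions needed for the ratios to be well-defined are a minor improvement in precision over the paper's version.
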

\begin{proof}
By definition, we have $MR_{UZ|A=1,e(X)=e(x)}\geq 1$ and $MR_{UZ|A=0,e(X)=e(x)}\geq1$, then 
$MR_{UZ|e(X)=e(x)}=\max (MR_{UZ|A=1,e(X)=e(x)},MR_{UZ|A=0,e(X)=e(x)})\geq 1.$
Assume $RR_{AU|e(x)}=\underset{k=0,1}{\max}RR_{AU,k|e(x)}<1$, then it implies that
\begin{align*}
    P(U=0|A=1,e(X)=e(x))<P(U=0|A=0,e(X)=e(x)),\\
    P(U=1|A=1,e(X)=e(x))<P(U=1|A=0,e(X)=e(x)).
\end{align*}
This further implies that $1=P(U=0|A=1,e(X)=e(x))+P(U=1|A=1,e(X)=e(x))<P(U=0|A=0,e(X)=e(x))+P(U=1|A=0,e(X)=e(x))=1$, which is not true. Thus, we have proved by contradiction that $RR_{AU|e(X)=e(x)}\geq 1$.
\end{proof}
\begin{proposition}
\label{sa_prop1}
$$CMR_{AZ^+}=\frac{MR_{AZ}}{MR_{AZ^+}^{true}}\leq BF_U,\ CMR_{AZ^-}=\frac{MR_{AZ}}{MR_{AZ^-}^{true}}\leq BF_U,\ CMR_{AZ}=\frac{MR_{AZ}}{MR_{AZ}^{true}}\leq BF_U,$$
\end{proposition}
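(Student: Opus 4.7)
The plan is to adapt the Ding--VanderWeele bounding argument for relative risks to the mean-ratio setting that arises from the RMST outcome $Z = \min(T,\tau)$. Throughout I would work within a fixed propensity score stratum $e(X) = e(x)$ and suppress it from the notation.

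First I would expand both $MR_{AZ}$ and each of $MR_{AZ^+}^{true}, MR_{AZ^-}^{true}, MR_{AZ}^{true}$ by conditioning on the unmeasured confounder $U$. By the tower property,
\begin{equation*}
E(Z \mid A = a) \;=\; \sum_u r_a(u)\, P(U = u \mid A = a),
\end{equation*}
while each true (unconfounded) counterpart replaces the conditional $U$-distribution in the numerator or the denominator by a standardizing distribution (e.g.\ $P(U=u)$, or $P(U=u \mid A=1)$, or $P(U=u \mid A=0)$, which distinguishes the ``$+$'' and ``$-$'' variants). Dividing $MR_{AZ}$ by each true mean ratio collapses $CMR_{AZ^+}$, $CMR_{AZ^-}$, and $CMR_{AZ}$ to a single canonical form,
\begin{equation*}
\frac{\sum_u r(u)\, p_1(u)}{\sum_u r(u)\, p_0(u)},
\end{equation*}
in which $r(u) \in \{r_0(u), r_1(u)\}$ and $\{p_1(u)\}, \{p_0(u)\}$ are two probability mass functions on the support of $U$.

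The heart of the argument is then the weighted-sum inequality that underlies the E-value: whenever $\max_u p_1(u)/p_0(u) \le \gamma$ and $\max_u r(u)/\min_u r(u) \le \gamma'$,
\begin{equation*}
\frac{\sum_u r(u)\, p_1(u)}{\sum_u r(u)\, p_0(u)} \;\le\; \frac{\gamma\, \gamma'}{\gamma + \gamma' - 1}.
\end{equation*}
Substituting $\gamma = RR_{AU}$ and $\gamma' = MR_{UZ}$, both of which are already defined as suprema over $k$ and over $A \in \{0,1\}$ in Section~\ref{rmst_sa}, reproduces exactly the bounding factor $BF_U$ of equation~(\ref{eq:bf}). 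For each of the three variants the only adjustment is which of $r_0, r_1$ plays the role of $r(u)$ and which distribution is placed in the numerator; the bound is the same in all three cases because the sensitivity parameters have already been maximized over those choices.

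The main obstacle is establishing the weighted-sum inequality itself. The cleanest route is to show, subject to the constraints on the $p_j(u)$ and on the ratios of the $r(u)$, that the supremum of the ratio is attained at a two-point configuration on the support of $U$, using an extreme-point argument on the dependence of the ratio on $p_0, p_1$ and $r$. Once reduced to two atoms, the problem is an elementary optimization that returns $\gamma\gamma' / (\gamma + \gamma' - 1)$; because Proposition~\ref{sa_binary} already restricts attention to binary $U$, the two-point reduction is automatic and only the algebraic step must be executed. Finally, one notes the bound is unchanged when the stratum $e(X) = e(x)$ is replaced by the worst stratum via $BF_U^* = \max_{e(x)} BF_{U \mid e(X) = e(x)}$, which is how the proposition is ultimately deployed in the sensitivity bounds~(\ref{eq:sa_posi})--(\ref{eq:sa_nega}).
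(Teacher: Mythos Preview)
Your approach is essentially the paper's: both reduce $CMR_{AZ^+}$ and $CMR_{AZ^-}$ to a ratio of the form $\sum_u r(u)\,p_1(u)\big/\sum_u r(u)\,p_0(u)$ and then invoke the Ding--VanderWeele weighted-sum inequality (their Lemma~A.1, which in the binary-$U$ case is exactly your two-point optimization). For $CMR_{AZ^+}$ one gets $r=r_0$, $p_1=F_1$, $p_0=F_0$; for $CMR_{AZ^-}$ one gets $r=r_1$ with the same $p_1,p_0$. In both cases $\max_u p_1(u)/p_0(u)\le RR_{AU}$ and $\max_u r(u)/\min_u r(u)\le MR_{UZ}$, so the bound $BF_U$ drops out as you describe.

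The gap is in the third inequality. Contrary to what you claim, $CMR_{AZ}$ does \emph{not} collapse to your canonical form: writing it out,
\[
CMR_{AZ}=\frac{\int r_1\,dF_1}{\int r_0\,dF_0}\cdot\frac{\int r_0\,dF}{\int r_1\,dF},
\]
which involves both $r_0$ and $r_1$ and the mixture distribution $F=fF_1+(1-f)F_0$; it is not a single ratio $\sum r\,p_1/\sum r\,p_0$ with $p_1/p_0$ bounded by $RR_{AU}$. The paper closes this gap by an extra step you omit: it first shows
\[
\frac{1}{CMR_{AZ}}=\frac{w}{CMR_{AZ^+}}+\frac{1-w}{CMR_{AZ^-}},\qquad
w=\frac{f\int r_0\,dF_1}{f\int r_0\,dF_1+(1-f)\int r_0\,dF_0}\in[0,1],
\]
so that $1/CMR_{AZ}\ge w/BF_U+(1-w)/BF_U=1/BF_U$. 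You should insert this harmonic-mean decomposition before claiming the bound for $CMR_{AZ}$.
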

\begin{proof}
First, let $f=P(A=1)$, then we have
\begin{align} 
MR_{AZ}^{true}=&\frac{\int r_1(u)F(du)}{\int r_0(u)F(du)}
=\frac{f\int r_1(u) F_1(du)+(1-f)\int r_1(u)F_0(du)}{f\int r_0(u) F_1(du)+(1-f)\int r_0(u)F_0(du)}\\
=&\frac{f\int r_0(u)F_1(du)}{f\int r_0(u)F_1(du)+(1-f)\int r_0(u)F_0(du)}\times\frac{\int r_1(u)F_1(du)}{\int r_0(u)F_1(du)}\\
&+\frac{(1-f)\int r_0(u)F_0(du)}{f\int r_0(u)F_1(du)+(1-f)\int r_0(u)F_0(du)}\times \frac{\int r_1(u)F_0(du)}{\int r_0(u)F_0(du)}.
\end{align}

Let $w=\frac{f\int r_0(u)F_1(du)}{f\int r_0(u)F_1(du)+(1-f)\int r_0(u)F_0(du)}\in [0,1]$, then we have 
\begin{align*}
& MR^{true}_{AZ}=wMR^{true}_{AZ^+}+(1-w)MR^{true}_{AZ^-};
& \frac{1}{CMR_{AZ}}=\frac{w}{CMR_{AZ^+}}+\frac{1-w}{CMR_{AZ^-}}
\end{align*}

Second, we have 
$$CMR_{AZ^+}=\frac{MR^{obs}_{AZ}}{MR_{AZ^+}^{true}}=\frac{\int r_1(u)F(du)}{\int r_0(u)F(du)}/\frac{\int r_1(u)F_1(du)}{\int r_0(u)F_1(du)}=\frac{w_1\underset{u}{\max}r_0(u)+(1-w_1)\underset{u}{\min}r_0(u)}{w_0\underset{u}{\max}r_0(u)+(1-w_0)\underset{u}{\min}r_0(u)}$$
where $w_1=\frac{\int[r_0(u)-\underset{u}{\min}r_0(u)]F_1(du)}{\underset{u}{\max}r_0(u)-\underset{u}{\min}r_0(u)}$ and $w_0=\frac{\int[r_0(u)-\underset{u}{\min}r_0(u)]F_0(du)}{\underset{u}{\max}r_0(u)-\underset{u}{\min}r_0(u)}$. 

Define $\Gamma=\frac{w_1}{w_0}$ then 
\begin{align}
\Gamma&=\frac{w_1}{w_0}=\frac{\int[r_0(u)-\underset{u}{\min}r_0(u)]F_1(du)}{\int[r_0(u)-\underset{u}{\min}r_0(u)]F_0(du)}
=\frac{\int[r_0(u)-\underset{u}{\min}r_0(u)]RR_{AU}(u)F_0(du)}{\int[r_0(u)-\underset{u}{\min}r_0(u)]F_0(du)}\\
&\leq \frac{\underset{u}{\max}RR_{AU}(u)\int[r_0(u)-\underset{u}{\min}r_0(u)]F_0(du)}{\int[r_0(u)-\underset{u}{\min}r_0(u)]F_0(du)}
=RR_{AU}.
\end{align}

Write $w_0=\frac{w_1}{\Gamma}$, then
$$CMR_{AZ}^+=\frac{\underset{u}{[\max}r_0(u)-\underset{u}{\min}r_0(u)]w_1+\underset{u}{\min}r_0(u)}{[\underset{u}{\max}r_0(u)-\underset{u}{\min}r_0(u)]w1/\Gamma+\underset{u}{\min}r_0(u)}.$$

If $\Gamma>1$, $CMR_{AZ}^+$ is increasing in $w_1$ according to Lemma A.1 in the eAppendix of \citet{ding2016sensitivity}, then the maximum attains at $w_1=1$, and we have
$$CMR_{AZ}^+\leq\frac{\Gamma\times MR_{UZ|A=0}}{\Gamma+MR_{UZ|A=0}-1}\leq\frac{RR_{AU}\times MR_{UZ|A=0}}{RR_{AU}+MR_{UZ|A=0}-1}.$$

If $\Gamma\leq1$, $CMR_{AZ}^+$ is non-increasing in $w_1$ according to Lemma A.1 in the eAppendix of \citet{ding2016sensitivity}, then the maximum attains at $w_1=0$, and we have
$$CMR_{AZ}^+\leq 1\leq\frac{RR_{AU}\times MR_{UZ|A=0}}{RR_{AU}+MR_{UZ|A=0}-1}.$$
Similarly, by $\frac{1}{CMR_{AZ}}=\frac{w}{CMR_{AZ^+}}+\frac{1-w}{CMR_{AZ^-}}$, we have 
$$\frac{1}{CMR_{AZ}}\geq\frac{1}{BF_U}\ ,  CMR_{AZ}\leq BF_U.$$ 
\end{proof}

To study the average causal effect of the exposure on the difference scale, we need the following definitions:
\begin{itemize}
\item Define $m_0=E(Z|A=0)$ and $m_1=E(Z|A=1)$ , then the observed mean difference of exposure on the outcome is $m_1-m_0$.
\item  The average causal effect of the exposure on the outcome for exposed is
\begin{align*} 
ACE_{AZ^+}^{true}&=\int E(Z|A=1, U=u)F_1(du)-\int E(Z|A=0, U=u)F_1(du)\\
&=m_1-\int r_0(u)F_1(du).
\end{align*}
\item The average causal effect of the exposure on the outcome for unexposed is
\begin{align*} 
ACE_{AZ^-}^{true}&=\int E(Z|A=1, U=u)F_0(du)-\int E(Z|A=0, U=u)F_0(du)\\
&=\int r_1(u)F_0(du)-m_0.
\end{align*}
\item The average causal effect of the exposure on the outcome for whole population is
\begin{align*}
ACE_{AZ}^{true}&=\int E(Z|A=1, U=u)F(du)-\int E(Z|A=0, U=u)F(du)\\
&=fACE_{AZ^+}^{true}+(1-f)ACE_{AZ^-}^{true}.
\end{align*}

\end{itemize}
\begin{proposition}
\label{sa_prop2}
For nonnegative outcomes and $ACE_{AZ}^{obs}\geq0$, the lower bounds for the average causal effects are 
\begin{align*}
& ACE_{AZ^+}^{true}\geq m_1-m_0\times BF_U; ACE_{AZ^-}^{true}\geq m_1/BF_U-m_0;\\
& ACE_{AZ}^{true}\geq (m_1-m_0\times BF_U)[f+(1-f)/BF_U]=(\frac{m_1}{BF_U}-m_0)[f\times BF_U+(1-f)].
\end{align*}
\end{proposition}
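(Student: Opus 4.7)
The plan is to derive the three inequalities sequentially by invoking Proposition \ref{sa_prop1} and translating its multiplicative bounds on $MR^{true}$ into additive bounds on the counterfactual means $\int r_0(u) F_1(du)$ and $\int r_1(u) F_0(du)$ that appear in the definitions of the ACEs. The proof is essentially a bookkeeping exercise once Proposition \ref{sa_prop1} is in hand.

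I would begin with $ACE_{AZ^+}^{true}$. Because outcomes are nonnegative and $ACE_{AZ}^{obs} = m_1 - m_0 \geq 0$, we have $m_1 \geq m_0 \geq 0$, and (taking $m_0 > 0$; the degenerate case is trivial) $MR_{AZ}^{obs} = m_1/m_0 \geq 1$. Proposition \ref{sa_prop1} then gives $CMR_{AZ^+} = MR_{AZ}^{obs}/MR_{AZ^+}^{true} \leq BF_U$, which rearranges to $MR_{AZ^+}^{true} \geq m_1/(m_0 \cdot BF_U)$. Since $MR_{AZ^+}^{true} = m_1/\int r_0(u) F_1(du)$, this is equivalent to the upper bound $\int r_0(u) F_1(du) \leq m_0 \cdot BF_U$, and substituting into $ACE_{AZ^+}^{true} = m_1 - \int r_0(u) F_1(du)$ yields the first inequality. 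An entirely parallel argument handles $ACE_{AZ^-}^{true}$: applying Proposition \ref{sa_prop1} to $CMR_{AZ^-}$ gives $MR_{AZ^-}^{true} \geq m_1/(m_0 \cdot BF_U)$, and since $MR_{AZ^-}^{true} = \int r_1(u) F_0(du)/m_0$, we obtain $\int r_1(u) F_0(du) \geq m_1/BF_U$, so $ACE_{AZ^-}^{true} \geq m_1/BF_U - m_0$.

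For the population-level bound, I would substitute the two individual bounds into the weighted decomposition $ACE_{AZ}^{true} = f \cdot ACE_{AZ^+}^{true} + (1-f) \cdot ACE_{AZ^-}^{true}$, obtaining
\begin{align*}
ACE_{AZ}^{true} &\geq f(m_1 - m_0 \cdot BF_U) + (1-f)(m_1/BF_U - m_0) \\
&= m_1 \bigl[f + (1-f)/BF_U\bigr] - m_0 \bigl[f \cdot BF_U + (1-f)\bigr],
\end{align*}
and then recover the two factored forms in the statement by pulling out $[f + (1-f)/BF_U]$ in the first case and $[f \cdot BF_U + (1-f)]$ in the second; both produce the same linear combination of $m_1$ and $m_0$, so the equality of the two factorizations is automatic. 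The only mildly delicate point, and the main obstacle to the proof, is tracking the direction of the inequality when inverting $MR^{true} \geq \text{(something)}$ to bound the counterfactual mean; this inversion is valid because nonnegativity of outcomes keeps all relevant denominators positive, and the bound reverses direction as required.
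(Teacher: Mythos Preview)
Your proposal is correct and follows essentially the same approach as the paper: both arguments invoke Proposition~\ref{sa_prop1} to bound the two counterfactual means $\int r_0(u)F_1(du)\leq m_0\cdot BF_U$ and $\int r_1(u)F_0(du)\geq m_1/BF_U$, then substitute into the definitions of $ACE_{AZ^+}^{true}$, $ACE_{AZ^-}^{true}$, and the weighted decomposition for $ACE_{AZ}^{true}$. The only cosmetic difference is that the paper simplifies $CMR_{AZ^+}$ and $CMR_{AZ^-}$ directly to the ratios $\int r_0(u)F_1(du)/m_0$ and $m_1/\int r_1(u)F_0(du)$, whereas you pass through $MR_{AZ^\pm}^{true}$ as an intermediate quantity; the algebra is equivalent.
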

\begin{proof}
From the data, we can identify
$$m_1=\int E(Z|A=1, U=u)F_1(du)=\int r_1(u)F_1(du)=E(Z|A=1);$$
$$m_0=\int E(Z|A=0, U=u)F_0(du)=\int r_0(u)F_0(du)=E(Z|A=0).$$
The counterfactual probabilities are not identifiable:
$$E(Z(1)=1|A=0)=\int E(Z=1|A=1, U=u)F_0(du)=\int r_1(u)F_0(du);$$
$$E(Z(0)=1|A=1)=\int E(Z=1|A=0, U=u)F_1(du)=\int r_0(u)F_1(du).$$

First, by proposition~\ref{sa_prop1} we have
\begin{align*}
\frac{m_1}{E(Z(1)=1|A=0)}&=\frac{\int r_1(u)F_1(du)}{\int r_1(u)F_0(du)}
=\frac{\int r_1(u)F_1(du)}{\int r_0(u)F_0(du)}/\frac{\int r_1(u)F_0(du)}{\int r_0(u)F_0(du)}\\
&=\frac{MR_{AZ}}{MR_{AZ^-}^{true}}=CMR_{AZ^-}\leq BF_U.
\end{align*}
Thus, we have $E(Z(1=1)|A=0)\geq \frac{m_1}{BF_U}$.

Second, by proposition~\ref{sa_prop1} again we have 
$$\frac{E(Z(0)=1|A=1)}{m_0}=\frac{\int r_0(u)F_1(du)}{\int r_0(u)F_0(du)}=CMR_{AZ^+}\leq BF_U.$$
Thus, we have $E(Z(0)=1|A=1)\leq m_0 BF_U.$

By definition of ACE and the inequalities derived above, we have
\begin{align*}
ACE_{AZ^+}^{true}&=m_1-\int r_0(u)F_1(du)\geq m_1-m_0\times BF_U ;\\
ACE_{AZ^-}^{true}&=\int r_1(u)F_0(du)-m_)\geq m_1/BF_U-m_0 ;\\
ACE_{AZ}^{true}&=f\cdot ACE_{AZ^+}^{true}+(1-f)ACE_{AZ^-}^{true}\\
&\geq f(m_1-m_0BF_U)+(1-f)(\frac{m_1}{BF_U}-m_0)\\
&=(m_1-m_0\times BF_U)[f+(1-f)/BF_U]\\
&=(\frac{m_1}{BF_U}-m_0)[f\times BF_U+(1-f)].
\end{align*}
\end{proof}
\begin{proposition}
\label{sa_prop3}
For nonnegative outcomes with $ACE_{AZ}^{obs}<0$, we have
\begin{align*}
& ACE_{AZ^+}^{true}\leq m_1 BF_U-m_0 ;
ACE_{AZ^-}^{true}\leq m_1-\frac{m_0}{BF_U} ;\\
& ACE_{AZ}^{true}\leq (m_1 BF_U-m_0)(f+\frac{1-f}{BF_U})=(m_1-\frac{m_0}{BF_U})(f BF_U+1-f).
\end{align*}
\end{proposition}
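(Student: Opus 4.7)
\textbf{Proof proposal for Proposition~\ref{sa_prop3}.} My plan is to reduce the negative-effect statement to the positive-effect statement in Proposition~\ref{sa_prop2} by relabeling the treatment. Since $ACE_{AZ}^{obs}=m_1-m_0<0$, setting $A'=1-A$ yields $m'_1-m'_0=m_0-m_1>0$, bringing the relabeled problem under the scope of Proposition~\ref{sa_prop2}.

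First I would verify the bookkeeping of the relabeling: the conditional confounder distributions swap ($F'_1=F_0$, $F'_0=F_1$); the potential outcomes satisfy $Z^{A'=a}=Z^{1-a}$; and, directly from the definitions of the conditional causal effects,
\[
ACE^{true}_{A'Z^+}=-ACE^{true}_{AZ^-}, \qquad ACE^{true}_{A'Z^-}=-ACE^{true}_{AZ^+}, \qquad f'=1-f.
\]
Then I would apply Proposition~\ref{sa_prop2} to the primed problem to obtain lower bounds on $ACE^{true}_{A'Z^+}$ and $ACE^{true}_{A'Z^-}$; negating these and using the sign relations above gives upper bounds on $ACE^{true}_{AZ^-}$ and $ACE^{true}_{AZ^+}$ in terms of $m_0$, $m_1$, and $BF_U$. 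The combined-population bound then follows from $ACE^{true}_{AZ}=f\cdot ACE^{true}_{AZ^+}+(1-f)\cdot ACE^{true}_{AZ^-}$; the stated factored identity, namely $f(m_1BF_U-m_0)+(1-f)(m_1-m_0/BF_U)=(m_1BF_U-m_0)(f+(1-f)/BF_U)$, is a short algebraic verification that I would defer to the end.

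An alternative route that avoids relabeling is to rerun the proof of Proposition~\ref{sa_prop1} with the roles of $F_0$ and $F_1$ (equivalently, of treated and untreated) interchanged. By the same bounding-factor argument one obtains the ``reverse'' inequalities $CMR_{AZ^+}\ge 1/BF_U$ and $CMR_{AZ^-}\ge 1/BF_U$, which translate to $E(Z^0|A=1)\ge m_0/BF_U$ and $E(Z^1|A=0)\le m_1 BF_U$. Substituting into $ACE^{true}_{AZ^+}=m_1-E(Z^0|A=1)$ and $ACE^{true}_{AZ^-}=E(Z^1|A=0)-m_0$ then yields the two individual upper bounds without invoking Proposition~\ref{sa_prop2} at all.

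The main obstacle, in either route, is establishing that the bounding factor is invariant under swapping treatment labels, i.e.\ $BF'_U=BF_U$. Because $RR_{AU}$ and $MR_{UZ}$ are defined as maxima over categories of $U$ and over treatment arm, this symmetry should hold under the paper's convention that the sensitivity parameters are specified symmetrically in the two arms, but it has to be checked explicitly: one must argue that $\max_k P(U{=}k|A{=}1)/P(U{=}k|A{=}0)$ and its reciprocal-direction analogue are controlled by the same sensitivity parameter, and likewise for the mean-ratio quantity. Once this symmetry is secured, the remainder of the argument is essentially mechanical, relying only on Proposition~\ref{sa_prop1} (or Proposition~\ref{sa_prop2}) and elementary algebra.
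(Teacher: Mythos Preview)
Your primary route is exactly the paper's: set $\bar A=1-A$, invoke Proposition~\ref{sa_prop2} for the relabeled problem, and negate. The paper does not pause over the invariance $BF'_U=BF_U$ that you flag; it uses it tacitly, and as you anticipate it holds because both $RR_{AU}$ and $MR_{UZ}$ are defined as maxima (over $U$-levels, and over the two arms), hence symmetric under swapping the treatment label.

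One point deserves mention: your bookkeeping $ACE^{true}_{A'Z^+}=-ACE^{true}_{AZ^-}$ and $ACE^{true}_{A'Z^-}=-ACE^{true}_{AZ^+}$ is the correct one, since the ``exposed'' population for $A'$ is the ``unexposed'' population for $A$. The paper's proof instead writes $ACE^{true}_{\bar A Z^+}=-ACE^{true}_{AZ^+}$ and $ACE^{true}_{\bar A Z^-}=-ACE^{true}_{AZ^-}$, keeping $+$ with $+$ and $-$ with $-$. Carrying your (correct) relations through yields the two individual upper bounds with the $+/-$ labels interchanged relative to the printed statement, namely $ACE^{true}_{AZ^+}\le m_1-m_0/BF_U$ and $ACE^{true}_{AZ^-}\le m_1\,BF_U-m_0$. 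This discrepancy does not affect the combined bound in the matched setting, where $f=\tfrac12$ makes the two pieces enter symmetrically, and it is only that combined bound that is used downstream; but do not be alarmed when your carefully derived subgroup bounds fail to match the displayed $+/-$ labels.
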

\begin{proof}
Define $\bar{A}=1-A$. By applying proposition~\ref{sa_prop2} we have
\begin{align*}
ACE_{\bar{A}Z^+}^{true}&\geq E(Z|\bar{A}=1)-E(Z|\bar{A}=0)\times BF_U ;\\
ACE_{\bar{A}Z^-}^{true}&\geq E(Z|\bar{A}=1)/BF_U-E(Z|\bar{A}=0) ;\\
ACE_{\bar{A}Z}^{true}&\geq (E(Z|\bar{A}=1)-E(Z|\bar{A}=0)\times BF_U)[f+(1-f)/BF_U]\\
&=(\frac{E(Z|\bar{A}=1)}{BF_U}-E(Z|\bar{A}=0))[f\times BF_U+(1-f)].
\end{align*}
Because $ACE_{\bar{A}Z^+}^{true}=-ACE_{AZ^+}^{true}$, $ACE_{\bar{A}Z^-}^{true}=-ACE_{AZ^-}^{true}$ and $ACE_{\bar{A}Z}^{true}=-ACE_{AZ}^{true}$, and we also have $E(Z|\bar{A}=0)=E(Z|A=1)=m_1$ and $E(Z|\bar{A}=1)=E(Z|A=0)=m_0$. Then we have
\begin{align*}
& ACE_{AZ^+}^{true}\leq m_1 BF_U-m_0 ; ACE_{AZ^-}^{true}\leq m_1-\frac{m_0}{BF_U} ;\\
& ACE_{AZ}^{true}\leq (m_1 BF_U-m_0)(f+\frac{1-f}{BF_U})=(m_1-\frac{m_0}{BF_U})(f BF_U+1-f).
\end{align*}
\end{proof}
\subsection*{Web Appendix B.2: Proofs of Propositions about the Marginal Effect}

To make the bounding factor hold for all propensity score values, we consider the maximum value of $BF_U$ across all values of propensity score $e(X)$, which is defined as $BF_U^*=\underset{e(x)}{\max}(BF_{U|e(X)=e(x)})$. 
\begin{proposition}
\label{sa_prop4}
For nonnegative outcomes and $ACE_{AZ}^{obs}\geq0$, we have
\begin{align*}
& ACE_{AZ^+}^{true}\geq m_1-m_0\times BF^*_U ;
ACE_{AZ^-}^{true}\geq m_1/BF^*_U-m_0 ;\\
& ACE_{AZ}^{true}\geq (m_1-m_0\times BF^*_U)[f+(1-f)/BF^*_U]=(\frac{m_1}{BF^*_U}-m_0)[f\times BF^*_U+(1-f)].
\end{align*}
For nonnegative outcomes and $ACE_{AZ}^{obs}<0$, we have
\begin{align*}
& ACE_{AZ^+}^{true}\leq m_1 BF^*_U-m_0 ;
ACE_{AZ^-}^{true}\leq m_1-\frac{m_0}{BF^*_U} ;\\
& ACE_{AZ}^{true}\leq (m_1 BF^*_U-m_0)(f+\frac{1-f}{BF^*_U})=(m_1-\frac{m_0}{BF^*_U})(f BF^*_U+1-f).
\end{align*}
\end{proposition}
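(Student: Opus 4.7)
The plan is to reduce Proposition 4 to the stratum-wise bounds of Propositions 2 and 3 in two moves: replace each stratum-specific bounding factor $BF_{U\mid e(X)=e(x)}$ by its worst-case value $BF_U^*$, then average the resulting inequality over the propensity score. First, I would fix an arbitrary propensity score value $e(x)$ and write $m_a(e(x)) = E(Z\mid A=a, e(X)=e(x))$ and $f(e(x))=P(A=1\mid e(X)=e(x))$. Proposition 2, applied within this stratum, gives the stratum-level lower bound
\[
ACE^{true}_{AZ\mid e(X)=e(x)} \;\geq\; \bigl(m_1(e(x)) - m_0(e(x))\,BF_{U\mid e(X)=e(x)}\bigr)\Bigl[f(e(x)) + \tfrac{1-f(e(x))}{BF_{U\mid e(X)=e(x)}}\Bigr].
\]

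Next, I would verify that the right-hand side, viewed as a function $g(b) = (m_1 - m_0 b)[f + (1-f)/b]$ of $b \geq 1$ with fixed $m_0, m_1 \geq 0$ and $f \in [0,1]$, is nonincreasing in $b$. Expanding, $g(b) = m_1 f + m_1(1-f)/b - m_0 b f - m_0(1-f)$, so $g'(b) = -m_1(1-f)/b^2 - m_0 f \leq 0$. Since $BF_U^* \geq BF_{U\mid e(X)=e(x)}$ by definition, plugging $BF_U^*$ into $g$ only weakens the lower bound, and the stratum-level inequality remains valid with $BF_U^*$ in place of $BF_{U\mid e(X)=e(x)}$. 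For the negative-treatment-effect case, the analogous function $h(b)=(m_1 b - m_0)[f+(1-f)/b]$ has derivative $h'(b) = m_1 f + m_0(1-f)/b^2 \geq 0$, so $h$ is nondecreasing in $b$; plugging in the larger $BF_U^*$ again preserves the ``$\leq$'' in Proposition 3.

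The final step is to take expectations over $e(X)$. In the 1-1 matched sample the treatment-to-control ratio is constant, so $f(e(x)) = f = P(A=1)$ does not depend on $e(x)$, and as established in Lemma \ref{marginal_ignorable} we have $e(X)\indep A$ marginally in the matched sample. Consequently the bracketed factor $[f + (1-f)/BF_U^*]$ pulls outside the expectation over $e(X)$, and the matching-induced independence gives
\[
E_{e(X)}[m_a(e(X))] \;=\; E_{e(X)\mid A=a}\bigl[E(Z\mid A=a,e(X))\bigr] \;=\; E(Z\mid A=a) \;=\; m_a.
\]
Integrating the stratum-specific bound therefore yields the marginal inequality claimed in Proposition 4. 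The negative-effect statement follows by the identical averaging argument applied to Proposition 3.

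The main obstacle is the monotonicity step, because $m_1(e(x)) - m_0(e(x))\,b$ can cross zero as $b$ grows and it multiplies another $b$-dependent factor, raising the concern of a sign flip that would invert the inequality. The explicit derivative computations above show the monotonicity holds uniformly for all nonnegative $m_0, m_1$ and all $f \in [0,1]$, which keeps the reduction clean. Once monotonicity is in hand, the averaging step is essentially bookkeeping, relying only on the fact that propensity score matching with a constant allocation ratio balances $e(X)$ across treatment groups.
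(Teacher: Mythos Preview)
Your approach is correct but differs from the paper's in two respects. First, for the key step of replacing the stratum-specific $BF_{U\mid e(X)=e(x)}$ by $BF_U^*$, the paper avoids the derivative computation on the product $g(b)=(m_1-m_0 b)[f+(1-f)/b]$ altogether: it works on the two component inequalities separately, noting that $E(Z(0)\mid A=1)\leq m_0\,BF_U\leq m_0\,BF_U^*$ and $E(Z(1)\mid A=0)\geq m_1/BF_U\geq m_1/BF_U^*$ are one-step monotonicities, which immediately give $ACE_{AZ^+}^{true}\geq m_1-m_0\,BF_U^*$ and $ACE_{AZ^-}^{true}\geq m_1/BF_U^*-m_0$, and then combines them via $ACE_{AZ}^{true}=f\,ACE_{AZ^+}^{true}+(1-f)\,ACE_{AZ^-}^{true}$. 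This is more elementary and also delivers the two separate $ACE_{AZ^\pm}^{true}$ bounds that the proposition claims, which your write-up does not address explicitly (though they follow by the same, in fact trivial, monotonicity). Your product-monotonicity argument is a bit more work but handles the combined bound in one stroke and makes the worst-case substitution transparent. Second, in the paper's framing Proposition~\ref{sa_prop4} is still a within-stratum statement (the $m_0,m_1,f$ there are stratum-specific, per the convention announced at the start of Web Appendix~B.1); the averaging over $e(X)$ that you carry out is deferred to Proposition~\ref{sa_prop5}. So your proof effectively merges Propositions~\ref{sa_prop4} and~\ref{sa_prop5}, which is fine but is doing more than Proposition~\ref{sa_prop4} alone requires.
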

\begin{proof}
We will start from showing the results for nonnegative outcomes and $ACE_{AZ}^{obs}\geq0$.
First, we have
\begin{align*}
\frac{m_1}{E(Z(1)=1|A=0)}&=\frac{\int r_1(u)F_1(du)}{\int r_1(u)F_0(du)}
=\frac{\int r_1(u)F_1(du)}{\int r_0(u)F_0(du)}/\frac{\int r_1(u)F_0(du)}{\int r_0(u)F_0(du)}\\
&=\frac{MR_{AZ}}{MR_{AZ^-}^{true}}=CMR_{AZ^-}\leq BF_U \leq BF^*_U.
\end{align*}
Thus, we have $E(Z(1=1)|A=0)\geq \frac{m_1}{BF_U}\geq \frac{m_1}{BF^*_U}$.

Second, we know that $\frac{E(Z(0)=1|A=1)}{m_0}=\frac{\int r_0(u)F_1(du)}{\int r_0(u)F_0(du)}=CMR_{AZ^+}\leq BF_U\leq BF^*_U$, then we have $E(Z(0)=1|A=1)\leq m_0 BF_U\leq m_0 BF^*_U$.

By definition of ACE and the inequalities derived above, we have
\begin{align*}
ACE_{AZ^+}^{true}&=m_1-\int r_0(u)F_1(du)\geq m_1-m_0\times BF^*_U ;\\
ACE_{AZ^-}^{true}&=\int r_1(u)F_0(du)-m_)\geq m_1/BF^*_U-m_0 ;\\
ACE_{AZ}^{true}&=f\cdot ACE_{AZ^+}^{true}+(1-f)ACE_{AZ^-}^{true}\\
&\geq f(m_1-m_0BF^*_U)+(1-f)(\frac{m_1}{BF^*_U}-m_0)\\
&=(m_1-m_0\times BF^*_U)[f+(1-f)/BF^*_U]\\
&=(\frac{m_1}{B^*F_U}-m_0)[f\times BF^*_U+(1-f)].
\end{align*}
Similarly, we can prove the inequalities hold for nonnegative outcomes and $ACE_{AZ}^{obs}<0$.
\end{proof}
\begin{proposition}
\label{sa_prop5}
In the matched sample, we have the following inequality for nonnegative outcomes and $ACE_{AZ}^{obs}\geq0$:
$$ACE_{AZ}^{true}\geq \frac{1}{2}(1+\frac{1}{BF_U^*})E(Z|A=1)-\frac{1}{2}(1+BF_U^*)E(Z|A=0).$$
In the matched sample, we have the following inequality for nonnegative outcomes and $ACE_{AZ}^{obs}<0$:
$$ACE_{AZ}^{true}\leq \frac{1}{2}(1+BF_U^*)E(Z|A=1)-\frac{1}{2}(1+\frac{1}{BF_U^*})E(Z|A=0).$$
\end{proposition}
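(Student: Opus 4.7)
The plan is to derive Proposition 5 as a direct specialization of Proposition 4 to the 1-1 propensity score matched sample. Two ingredients are needed: (i) verifying that Proposition 4's inequalities apply in the matched data, and (ii) observing that pair matching forces $f := P(A=1) = 1/2$ in the matched sample, after which the claim becomes a short algebraic simplification.

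For (i), I would invoke Lemma 1 (marginal\_ignorable) from Web Appendix A, which establishes that $(Y^1,Y^0)\indep A$ holds marginally in the matched sample under propensity score pair matching. More importantly, the matching step does not alter the conditional distribution of the outcome given $(e(X), U, A)$ or the conditional sensitivity parameters $RR_{AU|e(X)=e(x)}$ and $MR_{UZ|e(X)=e(x)}$; it only reweights the marginal distribution of $e(X)$. Consequently the stratum-level bounds from Propositions 2 and 3 go through, the uniform bounding factor $BF_U^{*} = \max_{e(x)} BF_{U|e(X)=e(x)}$ retains its meaning, and Proposition 4 applies with $f$ set to the treated-unit proportion in the matched sample.

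For (ii), since 1-1 pair matching places each treated subject into a pair with exactly one control, the marginal probability of treatment in the matched sample is exactly $f = 1/2$. Plugging $f = 1/2$ into the positive-effect bound of Proposition 4 gives
$$ACE_{AZ}^{true} \geq (m_1 - m_0\, BF_U^{*})\left[\tfrac{1}{2} + \tfrac{1}{2\,BF_U^{*}}\right] = \tfrac{1}{2}\left(1 + \tfrac{1}{BF_U^{*}}\right) m_1 - \tfrac{1}{2}\left(1+ BF_U^{*}\right) m_0,$$
where I have used the identity $BF_U^{*}\cdot(1 + 1/BF_U^{*}) = BF_U^{*} + 1$. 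Since $m_1 = E(Z|A=1)$ and $m_0 = E(Z|A=0)$ in the matched sample, this is precisely the first inequality of Proposition 5. The negative-effect case follows by an identical substitution into the upper bound of Proposition 4 together with the same rearrangement.

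The main obstacle is conceptual rather than computational: one must carefully argue that the matched design does not disturb the stratum-wise sensitivity machinery, so that Proposition 4 can be invoked verbatim with the only change being the value of $f$. The algebra after $f=1/2$ is substituted is routine. If there were concern that matching introduces extra dependence among paired outcomes that could distort $m_1$ or $m_0$, I would again appeal to Lemma 1 to note that the marginal expectations of $Z$ in the treated and control arms of the matched sample are still identified and coincide with the group means appearing in the displayed bound.
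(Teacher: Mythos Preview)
Your proposal is correct and takes essentially the same approach as the paper: substitute $f = 1/2$ (the treated proportion under pair matching) into the Proposition~4 bound and simplify. The paper's proof is slightly more explicit about the aggregation step, writing $ACE_{AZ}^{true} = \sum_{e(x)} ACE_{AZ|e(X)=e(x)}^{true}\,P(e(X)=e(x))$ and invoking the stratum-level fact $P(A=1\mid e(X)=e(x)) = 1/2$ to pull the common $f=1/2$ through the sum, but the algebra and logic are otherwise identical to yours.
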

\begin{proof}
In the matched sample, we have $f=P(A=1|e(X)=e(x))=0.5$. 
For nonnegative outcomes and $ACE_{AZ}^{obs}\geq0$, we have $LHS=ACE_{AZ}^{true}=\underset{e(x)}{\sum}ACE_{AZ|e(X)=e(x)}^{true}P(e(X)=e(x))$ and
\begin{align*}
RHS=&\underset{e(x)}{\sum}(m_1-m_0 BF^*_U)(f+\frac{1-f}{BF^*_U})P(e(X)=e(x))\\
=&(\frac{1}{2}-\frac{1}{2}BF^*_U)\underset{e(x)}{\sum}m_1P(e(X)=e(x))+\frac{1}{BF^*_U}\underset{e(x)}{\sum}m_1P(e(X)=e(x))\\
&+(\frac{1}{2}-\frac{1}{2}BF^*_U)\underset{e(x)}{\sum}M_0P(e(X)=e(x))-\underset{e(x)}{\sum}m_0P(e(X)=e(x))\\
=&\frac{1}{2}(1+\frac{1}{BF^*_U})E(Z|A=1)-\frac{1}{2}(1+BF^*_U)E(Z|A=0).
\end{align*}
 Thus, we have $ACE_{AZ}^{true}\geq \frac{1}{2}(1+\frac{1}{BF_U^*})E(Z|A=1)-\frac{1}{2}(1+BF_U^*)E(Z|A=0)$.

For $ACE_{AZ}^{obs}<0$, we have $LHS=ACE_{AZ}^{true}=\underset{e(x)}{\sum}ACE_{AZ|e(X)=e(x)}^{true}P(e(X)=e(x))$ and
 \begin{align*}
RHS=&\underset{e(x)}{\sum}(m_1 BF^*_U-m_0)(f+\frac{1-f}{BF^*_U})P(e(X)=e(x))\\
=&\underset{e(x)}{\sum}(m_1 BF^*_U-m_0)(\frac{1}{2}+\frac{1}{2BF^*_U})P(e(X)=e(x))\\
=&\frac{1}{2}(1+\frac{1}{BF^*_U})[BF^*_U\underset{e(x)}{\sum}m_1P(e(X)=e(x))-\underset{e(x)}{\sum}m_0P(e(X)=e(x))]\\
=&\frac{1}{2}(1+BF_U^*)E(Z|A=1)-\frac{1}{2}(1+\frac{1}{BF_U^*})E(Z|A=0).
\end{align*}
Thus, we have $ACE_{AZ}^{true}\leq \frac{1}{2}(1+BF_U^*)E(Z|A=1)-\frac{1}{2}(1+\frac{1}{BF_U^*})E(Z|A=0)$.
\end{proof}

\section*{Web Appendix C: Estimation of $G_{ij}(u,v)$}

To compute the variance of RMSTs, one difficulty is to estimate the function $G_{ij}(u,v)$ based on data. Follow the notations in \citet{murray2000variance}, we need to transform the function $G_{ij}(u,v)$ into the counting process notation system. Suppose we have $n$ matched pairs, then let $i,j$ denote the groups and $k=1,\cdots, n$ denotes the $k$th pair. Let $U_{ik}$ be the censoring random variable corresponding to survival time $T_{ik}$, and the censored survival time is $X_{ik}=\min (T_{ik},U_{ik})$ with censoring status $\Delta_{ik}=I(T_{ik}<U_{ik})$. Then we have the following definitions:
\begin{itemize}
\item $Y_i(u)=\underset{k=1}{\overset{n}{\sum}}I(x_{ik}\geq u)$ and $Y_j(v)=\underset{k=1}{\overset{n}{\sum}}I(x_{jk}\geq v)$;
\item $Y_{ij}(u,v)=\underset{k=1}{\overset{n}{\sum}}I(x_{ik}\geq u,x_{jk}\geq v)$;
\item $dNi(u)=\underset{k=1}{\overset{n}{\sum}}I(u\leq x_{ik}<u+\Delta u, \Delta_{ik}=1)$, where $\Delta u\rightarrow 0$;
\item $dNj(v)=\underset{k=1}{\overset{n}{\sum}}I(v\leq x_{ik}<v+\Delta v, \Delta_{ik}=1)$, where $\Delta v\rightarrow 0$;
\item $dN_{ij}(u,v)=\underset{k=1}{\overset{n}{\sum}}I(u\leq x_{ik}< u+\Delta u,v\leq x_{jk}<v+\Delta v, \Delta_{ik}=1,\Delta_{jk}=1)$, where $\Delta u\rightarrow 0$ and $\Delta v\rightarrow 0$;
\item $dN_{ij}(u|v)=\underset{k=1}{\overset{n}{\sum}}I(u\leq x_{ik}<u+\Delta u, x_{jk}\geq v, \Delta_{ik}=1)$, where $\Delta u\rightarrow 0$;
\item $dN_{j|i}(v|u)=\underset{k=1}{\overset{n}{\sum}}I(v\leq x_{jk}<v+\Delta v, x_{ik}\geq u, \Delta_{jk}=1)$, where $\Delta v\rightarrow 0$;
\item The $\hat{G}_{ij}(u,v)$ could be estimated by the formula below, and we set $\Delta u=0$ and $\Delta v=0$ in real computation. The corresponding R code could be found in our supplementary materials.
\begin{align*}
\hat{G}_{ij}(u,v)=&n\frac{Y_{ij}(u,v)}{Y_i(u)Y_j(v)}\Bigg[\frac{dN_{ij}(u,v)}{Y_{ij}(u,v)}-\frac{dN_{ij}(u|v)dN_j(v)}{Y_{ij}(u,v)Y_j(v)}\\
&-\frac{dN_{j|i}(v|u)dN_i(u)}{Y_{ij}(u,v)Y_i(u)}+\frac{dN_i(u)dN_j(v)}{Y_i(u)Y_j(v)}\Bigg]
\end{align*}
\end{itemize}


\end{document}